\begin{document}
\numberofauthors{2}
 \author{
 \alignauthor
 Arvind Arasu\\
   \affaddr{Microsoft Research}\\
   \email{arvinda@microsoft.com}
 \alignauthor
 Raghav Kaushik\\
   \affaddr{Microsoft Research}\\
   \email{skaushi@microsoft.com}
}

\date{}
\title{Oblivious Query Processing}

\newcommand{\domain}[1]{\ensuremath{\mathcal{D}(#1)}}
\newcommand{\select}{\ensuremath{\sigma}}
\newcommand{\project}{\ensuremath{\pi}}
\newcommand{\join}{\ensuremath{\bowtie}}
\newcommand{\cproduct}{\ensuremath{\times}}
\newcommand{\union}{\ensuremath{\cup}}
\newcommand{\gunion}{\ensuremath{\bar{\cup}}}
\newcommand{\isect}{\ensuremath{\cap}}
\newcommand{\groupagg}{\ensuremath{\mathbbm{G}}}

\newcommand{\expand}{\ensuremath{\mathrm{Exp}}}
\newcommand{\idfunc}{\ensuremath{\mathrm{ID}}}
\newcommand{\rsfunc}{\ensuremath{\mathrm{RSum}}}
\newcommand{\attrs}{\ensuremath{\mathit{Attr}}}

\algblockdefx[CFor]{Foreach}{EndForeach}
[2]{\textbf{for all} #1 \textbf{in} #2 \textbf{do}} {\textbf{endfor}}

\algblockdefx[CFor]{ForIter}{EndForIter}
[2]{\textbf{for} #1 \textbf{to} #2 \textbf{do}} {\textbf{endfor}}

\maketitle

\begin{abstract}

  Motivated by cloud security concerns, there is an increasing
  interest in database systems that can store and support queries over
  encrypted data. A common architecture for such systems is to use a
  \emph{trusted} component such as a cryptographic co-processor for
  query processing that is used to securely decrypt data and perform
  computations in plaintext. The trusted component has limited memory,
  so most of the (input and intermediate) data is kept encrypted in an
  \emph{untrusted} storage and moved to the trusted component on
  ``demand.''

  In this setting, even with strong encryption, the data access
  pattern from untrusted storage has the potential to reveal sensitive
  information; indeed, all existing systems that use a trusted
  component for query processing over encrypted data have this
  vulnerability. In this paper, we undertake the first formal study of
  \emph{secure query processing}, where an adversary having full
  knowledge of the query (text) and observing the query execution
  learns nothing about the underlying database other than the result
  size of the query on the database. We introduce a simpler notion,
  \emph{oblivious query processing}, and show formally that a query
  admits secure query processing \emph{iff} it admits oblivious query
  processing. We present oblivious query processing algorithms for a
  rich class of database queries involving selections, joins, grouping
  and aggregation. For queries not handled by our algorithms, we
  provide some initial evidence that designing oblivious (and
  therefore secure) algorithms would be hard via reductions from two
  simple, well-studied problems that are generally believed to be
  hard. Our study of oblivious query processing also reveals
  interesting connections to\linebreak database join theory.
   
\end{abstract}


\section{Introduction}
\label{sec:intro}

There is a trend towards moving database functionality to the cloud
and many cloud providers have a \emph{database-as-a-service}
\emph{(DbaaS)} offering~\cite{amazonrds, sqlazure}. A DbaaS allows an
application to store its database in the cloud and run queries over
it. Moving a database to the cloud, while providing well-documented
advantages \cite{CurinoJPMWMBZ11}, introduces data security
concerns~\cite{smesurvey}. Any data stored on a cloud machine is
potentially accessible to snooping administrators and to attackers who
gain illegal access to cloud systems. There have been well-known
instances of data security breaches arising from such
adversaries~\cite{wsj}.

A simple mechanism to address these security concerns is
encryption. By keeping data stored in the cloud encrypted we can
thwart the kinds of attacks mentioned above. However encryption makes
computation and in particular \emph{query processing} over data
difficult. Standard encryption schemes are designed to ``hide'' data
while we need to ``see'' data to perform computations over
it. Addressing these challenges and designing database systems that
support query processing over encrypted data is an active area of
research~\cite{cipherbase, BajajS11, HacigumusILM02, PopaRZB11,
  monomi} and industry effort~\cite{sqlenc, oracleenc}.

A common architecture~\cite{cipherbase, BajajS11} for query processing
over encrypted data involves using \emph{trusted hardware} such as a
cryptographic co-processor~\cite{ibmcards}, designed to be
inaccessible to an adversary. The trusted hardware has access to the
encryption key, some computational capabilities, and limited
storage. During query processing, encrypted data is moved to the
trusted hardware, decrypted, and computed on. The trusted hardware has
limited storage so it is infeasible to store within it the entire
input database or intermediate results generated during query
processing; these are typically stored encrypted in an
\emph{untrusted} storage and moved to the trusted hardware only when
necessary. Other approaches to query processing over encrypted data
rely on (partial) \emph{homomorphic encryption}~\cite{HacigumusILM02,
  PopaRZB11, monomi} or using the client as the trusted
module\footnote{All of our results hold for this setting, but they are
  less interesting.}~\cite{HacigumusILM02, monomi}. These approaches
have limitations in terms of the class of queries they can handle or
data shipping costs they incur (see Section~\ref{sec:rel}), and they
are not the main focus of this paper.

The systems that use trusted hardware currently provide an
\emph{operational} security guarantee that any data outside of trusted
hardware is encrypted~\cite{cipherbase, BajajS11}. However, this
operational guarantee does not translate to \emph{end-to-end data
  security} since, even with strong encryption, the data movement
patterns to and from trusted hardware can potentially reveal
information about the underlying data. We call such information
leakage \emph{dynamic information leakage} and we illustrate it using
a simple join example.
\begin{example}
\label{ex1}
Consider a database with tables \texttt{\small Patient}
\texttt{\small (PatId,Name,City)} and \texttt{\small
  Visit(PatId,Date,Doctor)} storing patient details and their doctor
visits. These tables are encrypted by encrypting each record using a
standard encryption scheme and stored in untrusted memory. Using
suitably strong encryption\footnote{And padding to mask record
  sizes.}, we can ensure that the adversary does not learn anything
from the encrypted tables other than their sizes. (Such an encryption
scheme is non-deterministic so two encryptions of the same record
would look seemingly unrelated.) 

Consider a query that joins these two tables on \texttt{\small PatId}
column using a nested loop join algorithm. The algorithm moves each
patient record to the trusted hardware where it is decrypted. For each
patient record, all records of \texttt{\small Visit} table are moved
one after the other to the trusted hardware and decrypted. Whenever
the current patient record $p$ and visit record $v$ have the same
\texttt{\small PatId} value, the join record $\langle p, v \rangle$ is
encrypted and produced as output. An adversary observing the sequence
of records being moved in and out of trusted hardware learns the
join-graph. For example, if $5$ output records are produced in the
time interval between the first and second \texttt{\small Patient}
record moving to the trusted hardware, the adversary learns that some
patient had $5$ doctor visits.
\end{example}

The above discussion raises the natural question\linebreak whether we
can design query processing algorithms that avoid such dynamic
information leakage and provide end-to-end data security. The focus of
this paper is to seek an answer to this question; in particular, as a
contribution of this paper, we formalize a strong notion of
(end-to-end) secure query processing, develop efficient and secure
query processing algorithms for a large class of queries, and discuss
why queries outside of this class are unlikely to have efficient
secure algorithms.

There exists an extensive body of work \emph{Oblivious}\linebreak \emph{RAM}
\emph{(ORAM) Simulation} \cite{DBLP:journals/jacm/GoldreichO96,
  DBLP:conf/codaspy/GoodrichMOT12,pathoram,
  DBLP:conf/ndss/WilliamsS08}, a general technique that makes memory
accesses of an arbitrary program appear random by continuously
shuffling memory and adding spurious accesses. In Example~\ref{ex1},
with ORAM simulation the data accesses would appear random to an
adversary and we can show that the adversary learns no information
other than the total number of data accesses. 

\emph{Given general ORAM simulation, why design specialized secure
  query processing algorithms?}  We defer a full discussion of this
issue to Section~\ref{sec:oram}, but for a brief motivation consider
sorting an encrypted array of size $n$.  Just as in Example~\ref{ex1},
the data access patterns of a standard sorting algorithm such as
quicksort reveals information about the underlying data. An ORAM
simulation of quicksort would hide the access patterns; indeed, the
adversary does not even learn that a sort operation is being
performed. However, with current state-of-the-art ORAM algorithms, it
would incur an overhead of $\Theta(\log^2 n)$ per
access\footnote{Assuming ``small'' $\mathrm{polylog}(n)$ trusted
  memory.} of the original algorithm making the overall complexity of
sorting $\Theta (n \log^3 n)$. Instead, we could exploit the semantics
of sorting and design a secure sorting algorithm that has the
(optimal) time complexity of $\Theta (n \log
n)$~\cite{DBLP:journals/jacm/Goodrich11}. Here the adversary does
learn that the operation being performed is sorting (but does not
learn anything about the input being sorted) but we get significant
performance benefits. As we show in the rest of the paper, designing
specialized secure query processing algorithms helps us gain similar
performance advantages over generic ORAM simulations.

The exploration in this paper is part of the \emph{Cipherbase}
project~\cite{cbase-proj}, a larger effort to design and prototype a
comprehensive database system, relying on specialized hardware for
storing and processing encrypted data in the cloud.

\subsection{Overview of Contributions}
\label{sec:oview}
\vspace{1ex} \noindent \textbf{Secure Query Processing: } Informally,
we define a query processing algorithm for a query $Q$ to be
\emph{secure} if an adversary having full knowledge of the text of $Q$
and observing the execution of $Q$ does not learn anything about the
underlying database $D$ other than the result size of $Q$ over
$D$. The query execution happens within a \emph{trusted module (TM)}
not accessible to the adversary. The input database and possibly
intermediate results generated during query execution are stored
(encrypted) in an \emph{untrusted memory (UM)}. The adversary has
access to the untrusted memory and in particular can observe the
sequence of memory locations accessed and data values read and written
during query execution. Throughout, we assume a \emph{passive}
adversary, who does not actively interfere with query processing.

When defining security, we grant the adversary knowledge of query $Q$
which ensures that data security does not depend on the query being
kept secret. Databases are typically accessed through applications and
it is often easy to guess the query from a knowledge of the
application. Our formal definition of secure query processing
(Section~\ref{sec:sqp-def}) generalizes the informal definition above
and incorporates query security in addition to data security. Our
formal definition relies on machinery from standard cryptography such
as \emph{indistinguishability experiments}, but there are some
subtleties specific to database systems and applications that we
capture; Appendix~\ref{sec:query-sec} discusses these issues in
greater detail.

We note that simply communicating the (encrypted) query result over an
untrusted network reveals the result size, so a stronger notion of
security seems impractical in a cloud setting. Also, our definition of
secure query processing implies that an adversary with an access to
the cloud server gets no advantage over an adversary who can observe
only the communications between the client and the database server,
assuming both of them have knowledge of the query.

\vspace{1ex}\noindent \textbf{Oblivious Query Processing Algorithms: }
Central to the idea of secure query processing is the notion of
\emph{oblivious query processing}. Informally, a query processing
algorithm is oblivious if its (untrusted) memory access pattern is
independent of the database contents once we fix the query and its
input and output sizes. We can easily argue that any secure algorithm
is oblivious: otherwise, the algorithm has different memory access
patterns for different database instances, so the adversary learns
something about the database instance by observing the memory access
pattern of the algorithm. Interestingly, obliviousness is also a
\emph{sufficient} condition in the sense that any oblivious algorithm
can be made secure using standard cryptography
(Theorem~\ref{thm:obl}). The idea of reducing security to memory
access obliviousness was originally proposed
in~\cite{DBLP:journals/jacm/GoldreichO96} for general programs in the
context of software protection. Note that obliviousness is defined
with respect to memory accesses to the untrusted store; any memory
accesses internal to TM are invisible to the adversary and do not
affect security.

Our challenge therefore is to design oblivious algorithms for database
queries. We seek oblivious algorithms that have small TM memory
footprint since all practical realizations of the trusted module such
as cryptographic co-processors have limited storage (few
MBs) \cite{ibmcards}. Without this restriction a simple oblivious
algorithm is to read the entire database into TM and perform query
processing completely within TM.

To illustrate challenges in designing oblivious query processing
algorithms consider the simple join query $R(A,\ldots)$ $\join
S(A,\ldots)$ which seeks all pairs of tuples from $R$ and $S$ that
agree on attribute $A$. Figure~\ref{fig:joininst} shows two instances
for this join, represented as a binary ``join-graph''. Each $R$ and
$S$ tuple is shown as a vertex and its attribute $A$ value is shown
adjacent to it (lower case letters $a, b, \ldots$). An edge exists
between an $R$ tuple and an $S$ tuple having the same value in
attribute $A$, and each edge represents a join output. We note that
both instances have the same input output characteristics, $|R| = |S|
= |R \join S| = 16$, so an oblivious algorithm is required to have the
same memory access pattern for both instances. However, the internal
structure of the join graph is greatly different. All ``natural'' join
algorithms that use sorting or hashing to bring together joinable
tuples are sensitive to the join graph structure and therefore not
oblivious.

\begin{figure}
\centering
\subfigure[Instance 1] {
\fbox{
\includegraphics[trim=48ex 42ex 49ex 55ex, width=1.3in]{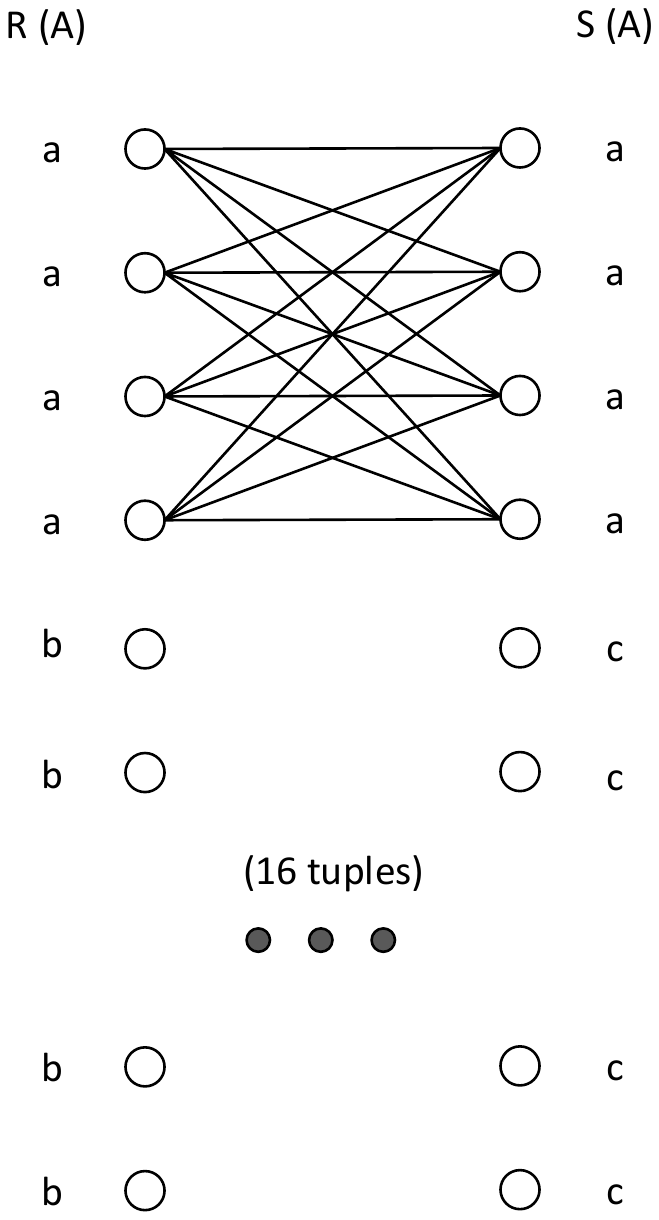}
}
\label{fig:joininsta}
}
\subfigure[Instance 2] {
\fbox{
\includegraphics[trim=48ex 42ex 49ex 55ex, width=1.3in]{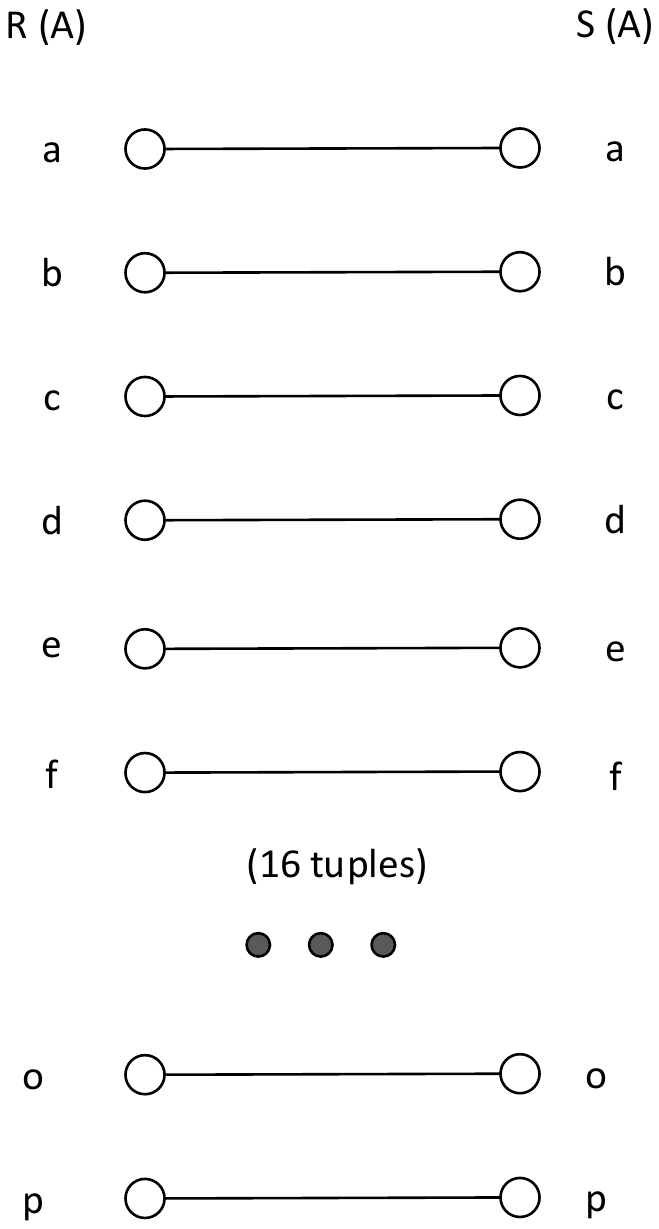}
}
\label{fig:joininstb}
}
\caption{Two join instances with same input/output sizes}
\label{fig:joininst}
\end{figure}

Traditional database query processing is \emph{non-oblivious} for two
reasons: First, traditional query processing proceeds by identifying a
query plan, which is a tree of operators with input tables at the
leaves. The operators are (conceptually) evaluated in a bottom-up
fashion and the output of each operator forms an input of its
parent. In some cases, this bottom-up evaluation can be pipelined.  In
others, the output of an operator needs to be generated fully before
the parent can consume it, and such intermediate output needs to be
temporarily stored in untrusted memory. This renders the overall query
processing non-oblivious since the size of the intermediate output can
vary depending on the database instance, even if we fix input and
output sizes.  Second, standard implementations of database operators
such as filters, joins, and grouping are not oblivious, so even if the
query plan consisted of a single operator, the resulting query
processing algorithm would not be oblivious.  In summary, traditional
query processing is non-oblivious at both inter- and intra-operator
levels, and we need to fundamentally rethink query processing to make
it oblivious.

Our first main algorithmic contribution is that a surprisingly rich
class of database queries admit {\em efficient} oblivious algorithms
(Sections~\ref{sec:overview},~\ref{sec:prim} and~\ref{sec:algorithm}).
\begin{theorem} 
\label{thm:informal}
(Informal) There exists an oblivious (secure) query processing
algorithm that requires $O(\log n)$ storage in TM for any database
query involving joins, grouping aggregation (as the outermost
operation), and filters, if (1) the non-foreign key join graph is
acyclic and (2) all grouping attributes are connected through foreign
key joins, where $n$ denotes the sum of query input and output
sizes. Further, assuming no auxiliary structures, the running time of
the algorithm is within $O(\log n)$ (multiplicative factor) of the
running time of the best insecure algorithm.
\end{theorem}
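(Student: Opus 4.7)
The plan is to reduce the construction to a small set of oblivious primitives and then show that a carefully chosen query plan can be executed using only those primitives with intermediate state whose size depends only on input and output sizes, never on the data distribution. The workhorse primitive is oblivious sort (e.g., Goodrich's $O(m \log m)$-time algorithm using $O(\log m)$ TM memory), on top of which one builds oblivious compaction, oblivious expansion, and oblivious linear scan with bounded running state. Each of these runs in $O(m \log m)$ time on $m$ records with $O(\log m)$ TM memory and has an access trace that is a deterministic function of $m$. By Theorem~\ref{thm:obl} (obliviousness implies security), proving Theorem~\ref{thm:informal} then reduces to a purely algorithmic statement: produce a plan whose running time is $O(n \log n)$ and whose access trace depends only on $n$.

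First I would handle the intra-operator problem. A selection/filter is turned into an oblivious tag-and-compact operation. Foreign-key joins are handled by taking the tagged disjoint union of the two tables, sorting obliviously on the join key, and pairing each referencing tuple with its (unique) referenced tuple via an oblivious scan that carries at most one tuple of running state. Since each such join has output size exactly equal to the size of the referencing side, chains of foreign-key joins produce intermediate relations of sizes determined entirely by the query text, which is precisely what is needed for inter-operator obliviousness. Grouping aggregation over an attribute that is reached through foreign-key joins is implemented by a final oblivious sort on the grouping attribute followed by an oblivious segment-scan that emits one aggregated tuple per group and pads the rest with dummies, which can then be compacted obliviously to the known output size.

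The main obstacle is the non-foreign-key join component, because a non-FK join can blow up to size far exceeding $n$ even when the final query output has size $n$, so any plan that materializes it directly is neither oblivious nor within the desired time bound. Here I would exploit acyclicity of the non-FK join graph via a Yannakakis-style full-reducer pass: repeatedly pick a leaf in a GYO-style join tree and perform an oblivious semijoin against its parent, implemented by sort-and-scan over the disjoint union on the shared attributes. Each semijoin only shrinks a relation, so all intermediate sizes are bounded by the original input sizes; after two passes (bottom-up and top-down) every surviving tuple is known to participate in the full join. The remaining dangling-free relations can then be composed with the surrounding foreign-key joins and fed into the final grouping stage, where the hypothesis that all grouping attributes are connected through FK joins lets the final aggregation treat each surviving tuple as contributing to a group determined by its FK-reachable attributes; this avoids ever having to enumerate the potentially huge non-FK join cross product in raw form.

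Finally I would bound the cost. Every primitive invocation is on a relation of size $O(n)$ and costs $O(n \log n)$; the number of primitive invocations is a function only of the fixed query text, so the total running time is $O(n \log n)$. Since the best insecure algorithm for such queries requires $\Omega(n)$ time to read the input and produce the output, the multiplicative overhead is $O(\log n)$, matching the claimed bound. TM memory is the maximum used by any single primitive, which is $O(\log n)$. Obliviousness of the composite plan follows because each primitive's trace depends only on its input size, and all input sizes along the plan are deterministic functions of the public query and of $n$; applying Theorem~\ref{thm:obl} then upgrades this to secure query processing. I expect the delicate step in a full write-up to be verifying that the semijoin reduction plus the FK-connectivity of the grouping attributes really suffice to reconstruct group aggregates without ever materializing a data-dependent intermediate, and this is where the two structural hypotheses of the theorem are used in an essential way.
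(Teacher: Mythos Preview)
Your overall scaffolding---reducing to oblivious sort-based primitives, handling foreign-key joins by sort-and-scan with bounded carry state, and composing per-step obliviousness via fixed intermediate sizes---matches the paper's architecture, and your cost and memory accounting is right. But the treatment of the non-foreign-key acyclic join is where the real work lies, and your Yannakakis full-reducer plan has a genuine gap.

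Boolean semijoin reduction only tells you \emph{which} tuples survive, not \emph{how many} output tuples each one contributes to, and it does not push partial aggregates through the join tree. For a GSPJ query $\groupagg_\Gcal^{\mathrm{SUM}(R_a.X)}(R_1\join\cdots\join R_q)$ with $\Gcal\subseteq\attrs(R_1)$, a dangling-free $R_1$ tuple $t$ may participate in many join tuples, and its contribution to the group sum depends on multiplicities coming from every subtree under $t$; your proposal to ``feed the surviving tuples into the final grouping stage'' has no mechanism for recovering those multiplicities without materializing a data-dependent intermediate. Likewise, for a plain SPJ query you must actually enumerate the join output obliviously; the second phase of Yannakakis is a sequence of binary joins whose intermediates are bounded by the output size, but you still need an oblivious binary-join subroutine, and that subroutine is exactly the nontrivial contribution you have not supplied. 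The paper closes both gaps with one idea: replace boolean semijoins by a \emph{semi-join aggregation} primitive that, in a bottom-up pass over the join tree, annotates every tuple $t\in R_i$ with $N$ (the number of join tuples of $\join\mathrm{Desc}(R_i)$ containing $t$) and, along the path to $R_a$, with the partial sum $S_X$. For GSPJ this makes the final aggregation a local computation on $R_1$ alone; for SPJ the $N$ values drive an oblivious \emph{expansion} of each $R_i$ to exactly the output size, after which a careful ordering lets the expanded relations be stitched tuple-for-tuple. You listed expansion as a primitive but never said what weights to expand by or why the resulting sequences align; that is precisely the content of the paper's binary and multiway join algorithms. In short, the step you flagged as ``delicate'' is not a verification detail but the central algorithmic idea, and your current plan does not contain it.
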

Theorem~\ref{thm:informal} suggests an interesting connection between
secure query processing and database join theory since \emph{acyclic
  joins} are a class of join queries known to be
tractable~\cite{DBLP:conf/pods/PaghP06}. We note that the class of
queries is fairly broad and representative of real-world analytical
queries. For example, most queries in the well-known TPC-H
benchmark~\cite{tpch} belong to this class, i.e., admit efficient
secure algorithms.

Assuming no auxiliary structures such as indexes, our algorithms are
efficient and within $O(\log n)$ of the running time of the best
insecure algorithm. While the no-index condition makes our results
less relevant for transactional workloads, where indexes play an
important role, they are quite relevant for analytical workloads where
indexes play a less critical role. In fact, query processing in
emerging database architectures such as \emph{column stores}
\cite{DBLP:journals/pvldb/AbadiBH09} has limited or no dependence on
indexes.

Further, with minor modifications to our algorithms using the
oblivious external memory sort algorithm
of~\cite{DBLP:conf/spaa/Goodrich11}, we get oblivious and secure
algorithms with excellent external (untrusted) memory characteristics.
\begin{theorem}
\label{thm:ext-informal}
(Informal) For the class of queries in Theorem~\ref{thm:informal}
there exists an oblivious (secure) algorithm with I/O complexity
within multiplicative factor\linebreak $\log_{M/B} (n/B)$ of that of the optimal
insecure algorithm, where $B$ is the block size and $M$ is the TM
memory.
\end{theorem}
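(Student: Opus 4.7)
The plan is to lift Theorem~\ref{thm:informal} to the external-memory model by auditing the oblivious algorithms used there and replacing each memory-access primitive with an I/O-efficient oblivious counterpart. First, I would observe that the algorithms behind Theorem~\ref{thm:informal}, as developed in Sections~\ref{sec:prim} and~\ref{sec:algorithm}, are built from a small, fixed vocabulary of ``data-oblivious'' building blocks: linear scans (reading or rewriting a relation once, with the same tuple-level routine applied at every position), oblivious sorts (used to bring joinable or groupable tuples into canonical order, to route expansion markers, and to compact outputs), and constant-TM post-processing over sorted streams (propagation of running aggregates, foreign-key lookups along the acyclic non-foreign-key join tree, etc.). In particular, the number of primitive invocations is a constant that depends only on the query shape, not on the instance.

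Next I would cost each primitive in the standard external-memory model with block size $B$ and TM size $M$. A linear scan over $m$ records costs $\Theta(m/B)$ I/Os and is trivially oblivious because its access sequence depends only on $m$. For sorts, I would invoke Goodrich's oblivious external-memory sorting algorithm~\cite{DBLP:conf/spaa/Goodrich11}, which sorts $m$ records using $O((m/B)\log_{M/B}(m/B))$ I/Os while guaranteeing that the block-access trace is a function of $m$ alone. Substituting this sort into the algorithms of Theorem~\ref{thm:informal} (in place of whatever internal-memory oblivious sort was assumed there) preserves obliviousness: the resulting block-access sequence is the concatenation of access sequences of individual oblivious primitives, each determined solely by the input/output sizes $n$ fixed by the theorem's setup, so by the composition argument already used in Section~\ref{sec:sqp-def} the composite algorithm remains oblivious and hence, by Theorem~\ref{thm:obl}, secure.

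Summing the costs, every sort contributes $O((n/B)\log_{M/B}(n/B))$ I/Os and every scan contributes $O(n/B)$ I/Os, and there is a constant number of each, so the total I/O complexity is $O((n/B)\log_{M/B}(n/B))$. Any insecure algorithm for a query in this class must at minimum read its input and write its output, incurring $\Omega(n/B)$ I/Os, so the multiplicative overhead of the oblivious variant is at most $O(\log_{M/B}(n/B))$, matching the claim. (If the optimal insecure algorithm itself requires sorting, e.g.\ to realize a merge-join or sort-based grouping, the overhead collapses to a constant, which is consistent with the stated bound.)

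The main obstacle I anticipate is the audit step: verifying that every primitive used in the Theorem~\ref{thm:informal} algorithms can indeed be cast as a bounded number of oblivious sorts and plain scans with no hidden random-access gadgets whose I/O cost would blow up by a $B$ factor in external memory. Concretely, subroutines that in Theorem~\ref{thm:informal} assume unit-cost access to small TM-resident tables (for instance, the expansion/aggregation bookkeeping driven by the acyclic join tree) must be rewritten so that each pass touches UM sequentially in block order, and any per-record auxiliary state is carried inside the records themselves or inside an $O(\log n)$-sized TM workspace. Once this rewriting is in place, the block-oblivious sort of~\cite{DBLP:conf/spaa/Goodrich11} handles the only non-sequential step, and the I/O bound follows by summation.
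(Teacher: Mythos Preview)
Your proposal is correct and matches the paper's own argument essentially line for line: the paper's proof (given for the formal version, Theorem~\ref{thm:ext-formal}) simply observes that every step of the algorithms is a linear scan except for the oblivious sorts, invokes Goodrich's external-memory oblivious sort bound $O((n/B)\log_{M/B}(n/B))$~\cite{DBLP:conf/spaa/Goodrich11}, and concludes. Your ``audit'' concern is exactly the content of the paper's one-line claim that ``all of our algorithms are simple scans except for the steps that perform oblivious sorting''; the paper asserts this without further elaboration, whereas you correctly flag it as the substantive thing to check.
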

In particular, if we have $\Omega (\sqrt{n})$ memory in TM our
external memory algorithms perform a constant number of scans to
evaluate the queries they handle.

Interestingly, for the special case of joins, secure algorithms have
been studied in the context of privacy preserving data integration
\cite{DBLP:conf/icde/LiC08}. The algorithm proposed in
\cite{DBLP:conf/icde/LiC08} proceeds by computing a cross product of
the input relations followed by a (secure) filter. Our algorithms are
significantly more efficient and handle grouping and aggregation.

\vspace{1ex}\noindent\textbf{Negative Results:} We have reason to
believe that queries outside of the class specified in
Theorem~\ref{thm:informal} do not admit secure efficient
algorithms. We show that the existence of secure algorithms would
imply more efficient algorithms for variants of classic hard problems
such as 3SUM (Section~\ref{sec:hardness}). These hardness arguments
suggest that we must accept a weaker notion of security if we wish to
support a larger class of queries.

\subsection{Oblivious RAM Simulations} 
\label{sec:oram}
ORAM simulations first proposed by Goldreich and Ostrovsky
\cite{DBLP:journals/jacm/GoldreichO96} is a general technique for
making memory accesses oblivious that works for arbitrary
programs. Specifically, ORAM simulation is the online transformation
of an arbitrary program $P$ to an equivalent program $P'$ whose memory
accesses appear random (more precisely, drawn from some distribution
that depends only on the number of memory accesses of $P$). By running
$P'$ within a \emph{secure CPU (TM)} and using suitable encryption, an
adversary observing the sequence of memory accesses to an untrusted
memory learns nothing about $P$ and its data other than its number of
memory accesses.  Current ORAM simulation techniques work by adding a
virtualization layer that continuously shuffles (untrusted) memory
contents and adds spurious memory accesses, so that the resulting
access pattern looks random.

A natural idea for oblivious query processing, implemented in a recent
system~\cite{phantom}, would be to run a standard query processing
algorithm under ORAM simulation. However, the resulting query
processing is \emph{not} secure for our definition of security since
it reveals more than just the output size. ORAM simulation, since it
is designed for general programs, does not hide the total number of
memory accesses; in the context of standard query processing, this
reveals the size of intermediate results in a query
plan. Understanding the utility of this weaker notion of security in
the context of a database system is an interesting direction of future
work.

For database queries that admit polynomial time algorithms
(which includes queries covered by Theorem~\ref{thm:informal}) we can
design oblivious algorithms based on ORAM simulation: the number of
memory accesses of such an algorithm is bounded by some
polynomial\footnote{This argument does not depend on $p$ being a
  polynomial, any function works.} $p(n,m)$, where $n$ is the input
size, and $m$, the output size. We modify the algorithm with dummy
memory accesses so that the number of memory accesses for any instance
with input size $n$ and output size $m$ is exactly $p(n,m)$. An ORAM
simulation of the modified algorithm is oblivious. We note that we
need to precisely specify $p(n,m)$ upto constants (not
asymptotically), otherwise the number of memory accesses would be
slightly different for different $(n,m)$ intances making the overall
algorithm non-oblivious. In practice, working out a precise
upper-bound $p(n,m)$ for arbitrarily complex queries is a non-trivial
undertaking.

Our algorithms which are designed to exploit the structure and
semantics of queries have significant performance benefits over the
ORAM-based technique sketched above given the current state-of-the-art
in ORAM simulation. For simplicity, assume for this discussion that
the query output size $m = O(n)$. For small TM memory
($\mathrm{polylog}(n)$), the current best ORAM simulation
techniques~\cite{pathoram, DBLP:conf/soda/KushilevitzLO12} incur an
overhead of $\Theta (\log^2 n)$ memory accesses per memory access of
the original algorithm. This implies that the time complexity of any
ORAM-based query processing algorithm is lower-bounded by $\Omega (n
\log^2 n)$.  In contrast, our algorithms have a time complexity of
$O (n \log n)$ and use $O(\log n)$ TM memory.

Also, by construction ORAM simulation randomly sprays memory accesses
and destroys locality of reference, reducing effectiveness of caching
and prefetching in a memory hierarchy. In a disk setting, a majority
of memory accesses result in a random disk seek and we can show that
any ORAM-based query processing algorithm incurs $\Omega ( \frac{n}{B
  \log M} \log^2 \frac{n}{B})$ disk seeks, where $M$ denotes the size
of TM memory. In contrast, all of our algorithms are scan-based except
for the oblivious sorting, which incurs $O (\log_{M/B}
(n/B)) \cdot o(n/B)$ seeks.


\section{Problem formulation}
\label{sec:prelim}

\subsection{Database Preliminaries}
\label{sec:dbprelim}
A \emph{relation schema}, $R(\bar{A})$, consists of a relation symbol
$R$ and associated attributes $\bar{A} = (A_{1}, \ldots, A_{k})$; we
use $\attrs(R)$ to denote the set of attributes $\{A_1, \ldots, A_k\}$
of $R$. An attribute $A_i$ has an associated set of values called its
\emph{domain}, denoted $\domain{A_i}$. We use $\domain{R}$ to denote
$\domain{\bar{A}} = \domain{A_1} \times \ldots \times \domain{A_k}$. A
\emph{database schema} is a set of relation schemas $R_1, \ldots,
R_m$. A (relation) instance corresponding to schema $R(A_1, \ldots,
A_k)$ is a bag (multiset) of \emph{tuples} of the form $\langle a_1,
\ldots, a_k \rangle$ where each $a_i \in \domain{A_i}$. A database
instance is a set of relation instances. In the following we abuse
notation and use the term relation (resp. database) to denote both
relation schema and instance (resp. database schema and instance). We
sometimes refer to relations as \emph{tables} and attributes as
\emph{columns}.

Given a tuple $t \in R$ and an attribute $A \in \attrs(R)$, $t[A]$
denotes the value of the tuple on attribute $A$; as a generalization
of this notation, if $\Acal \subseteq \attrs(R)$ is a set of
attributes, $t[\Acal]$ denotes the tuple $t$ restricted to attributes
in $\Acal$.

We consider two classes of database queries. A
\emph{select-project-join (SPJ)} query is of the form $\project_\Acal
(\select_P (R_1 \join \cdots \join R_q))$, where the projection
$\project$ is duplicate preserving (we use multiset semantics for all
queries) and $\join$ refers to the natural join.  For $R_1 \join
R_2$, each tuple $t_1 \in R_1$ joins with each tuple $t_2 \in R_2$
such that $t_1 [\attrs(R_1) \isect \attrs(R_2)] = t_2 [\attrs(R_1)
\isect \attrs(R_2)]$ to produce an output tuple $t$ over attributes
$\attrs(R_1) \union \attrs(R_2)$ that agrees with $t_1$ on attributes
$\attrs(R_1)$ and with $t_2$ on attributes $\attrs(R_2)$. The second
class of queries involves grouping and aggregation and is of the form
$\groupagg_\Gcal^{F(A)} (\select_P (R_1 \join \cdots \join R_q))$ and
we call such queries GSPJ queries.  Given relation $R$, $\Gcal
\subseteq \attrs(R)$, $A \in \attrs(R)$, $\groupagg_\Gcal^{F(A)} (R)$,
represents grouping by attributes in $\Gcal$ and computing aggregation
function $F$ over attribute $A$.

\subsection{Secure Query Processing}
\label{sec:sqp-def}
A \emph{relation encryption scheme} is used to encrypt relations. It
is a triple of polynomial algorithms $(\textsf{Enc}, \textsf{Dec},
\textsf{Gen})$ where $\textsf{Gen}$ takes a security parameter $k$ and
returns a key $K$; $\textsf{Enc}$ takes a key $K$, a plaintext
relation instance $R$ and returns a ciphertext relation $\Ccal_R$;
$\textsf{Dec}$ takes a ciphertext relation $\Ccal_R$ and key $K$ and
returns plaintext relation $R$ if $K$ was the key under which
$\Ccal_R$ was produced. A relation encryption scheme is also a
database encryption scheme: to encrypt a database instance we simply
encrypt each relation in the database.  

Informally, a relation encryption scheme is \emph{IND-CPA secure} if a
polynomial time adversary with access to encryption oracle cannot
distinguish between the encryption of two instances $R^{(1)}$ and
$R^{(2)}$ of relation schema $R$ such that $|R^{(1)}| = |R^{(2)}|$
($|R^{(1)}|$ denotes the number of tuples in $R^{(1)}$). Assuming all
tuples of a given schema have the same representational length (or can
be made so using padding), we can construct IND-CPA secure relation
encryption by encrypting each tuple using a standard encryption scheme
such as AES in CBC mode (which is believed to be IND-CPA secure for
message encryption). The detail that encryption is at a
tuple-granularity is relevant for our algorithms which assume that we
can read and decrypt one tuple at a time.

Our formal definition of secure query processing captures: (1)
Database security: An adversary with knowledge of a query does not
learn anything other than the result size of the query by observing
query execution; (2) Query security: An adversary without knowledge of
the query does not learn the constants in the query from query
execution. Appendix~\ref{sec:query-sec} contains a discussion of query
security.

A \emph{query template} $\Qcal$ is a set of queries that differ only
in constants. An example template is the set $\{ \select_{A=1} (R),$
$\select_{A=2} (R), \cdots \}$ which we denote $\select_{A=*} (R)$.

A query processing algorithm $\mathbb{A}_\Qcal$ for a query template
$\Qcal$ takes as input an encrypted database instance $\textsf{Enc}_K
(D)$, a query $Q \in \Qcal$ and produces as output\linebreak $\textsf{Enc}_K (Q
(D))$; Algorithm $\mathbb{A}_\Qcal$ has access to encryption key $K$
and the encryption scheme is IND-CPA secure. Our goal is to make
algorithm $\mathbb{A}_\Qcal$ secure against a passive adversary who
observes its execution. Algorithm $\mathbb{A}_\Qcal$ runs within the
trusted module TM. The TM also has a small amount of internal storage
invisible to the adversary. Algorithm $\mathbb{A}_\Qcal$ has access to
a large amount of untrusted storage which is sufficient to store
$\textsf{Enc}_K (D)$ and any intermediate state required by
$\mathbb{A}_\Qcal$.  The \emph{trace} of an execution of algorithm
$\mathbb{A}_\Qcal$ is the sequence of untrusted memory accesses
$\mathit{read} (i)$ and $\mathit{write} (i, \mathit{value})$, where
$i$ denotes the memory location.

We define security of algorithm $\mathbb{A}_\Qcal$ using the following
\emph{indistinguishability} experiment:
\begin{enumerate}
\item Pick $K \leftarrow \textsf{Gen} (1^k)$
\item The adversary $\Acal$ picks two queries $Q_1 \in \Qcal$, $Q_2
  \in \Qcal$ with the same template and two database instances
  $D^{(1)} = \{R_1^{(1)}, \ldots,$ $R_n^{(1)} \}$ and $D^{(2)} =$\linebreak
  $\{R_1^{(2)}, \ldots, R_n^{(2)}\}$ having the same schema such that
  (1) $|R_i^{(1)}| = |R_i^{(2)}|$ for all $i \in [1, n]$; and (2)\linebreak
  $|Q_1 (D^{(1)})| = |Q_2 (D^{(2)})|$.
\item Pick a random bit $b \leftarrow \{0,1\}$ and let $\tau_b$
  denote the trace of $\mathbb{A}_\Qcal (\textsf{Enc}_K (D^{(b)}),$
  $Q_b)$.
\item The adversary $\Acal$ outputs prediction $b'$ given $\tau_b$, 
  $\textsf{Enc}_K (D^{(b)})$, and $\textsf{Enc}_K (Q_b (D^{(b)}))$.
\end{enumerate}
We say adversary $\Acal$ succeeds if $b' = b$. Algorithm
$\mathbb{A}_\Qcal$ is secure if for any polynomial time adversary
$\Acal$, the probability of success is at most $1/2 + \mathit{negl}
(k)$ for some negligible function\footnote{A negligible function
  $\textit{negl}(k)$ is a function that grows slower than
  $\frac{1}{p(k)}$ for any polynomial $p(k)$; e.g., $\frac{1}{2^k}$.}
$\mathit{negl}$.  We note that our definition of security captures
both database security, since an adversary can pick $Q_1 = Q_2$, and
query security, since an adversary can pick $D^{(1)} = D^{(2)}$.

\subsection{Oblivious Query Processing}
As discussed in Section~\ref{sec:intro}, oblivious query processing is
a simpler notion that is equivalent to secure query processing. Fix an
algorithm $\mathbb{A}_\Qcal$. For an input $I = \textsf{Enc}_K (D)$
and query $Q \in \Qcal$, the memory access sequence
$\Mcal_{\mathbb{A}_\Qcal} (I, Q)$ is the sequence of UM memory reads
$r(i)$ and writes $w(i)$, where $i$ denotes the memory location; the
value being read/written is not part of\linebreak $\Mcal_{\mathbb{A}_\Qcal} (I,
Q)$. In general, $\mathbb{A}_\Qcal$ is randomized and\linebreak
$\Mcal_{\mathbb{A}_\Qcal} (I, Q)$ is a random variable defined over
all possible memory access sequences. Algorithm $\mathbb{A}_\Qcal$ is
oblivious if the distribution of its memory access sequences is
independent of database contents once we fix the query output and
database size.  Formally, Algorithm $\mathbb{A}_\Qcal$ is oblivious if
for any memory access sequence $M$, any two queries $Q_1, Q_2 \in
\Qcal$, any two database encryptions $I_1 = \textsf{Enc}_{K_1}
(D^{(1)})$, $I_2 = \textsf{Enc}_{K_2} (D^{(2)})$:
\[ \mathrm{Pr} [\Mcal_{\mathbb{A}_\Qcal} (I_1, Q_1) = M] = \mathrm{Pr}
[\Mcal_{\mathbb{A}_\Qcal} (I_2, Q_2) = M] \] where $D^{(1)} =
\{R_1^{(1)}, \ldots, R_n^{(1)} \}$ and $D^{(2)} = \{R_1^{(2)}, \ldots,$
$R_n^{(2)}\}$ have the same schema and: (1) $|R_i^{(1)}| = |R_i^{(2)}|$
for all $i \in [1, n]$; and (2) $|Q_1 (D^{(1)})| = |Q_2
(D^{(2)})|$. 

Our definition of obliviousness is more stringent than the one used in
ORAM simulation. In ORAM simulation, the memory access distribution
can depend on the total number of memory accesses, while our
definition precludes dependence on the total number of memory accesses
once the query input and output sizes are fixed.  The following
theorem establishes the connection between oblivious and secure query
processing.

\begin{theorem}
\label{thm:obl}
Assuming one-way functions exist, the existence of an oblivious
algorithm for a query template $\Qcal$ implies the existence of a
secure algorithm for $\Qcal$ with the same asymptotic performance
characteristics (TM memory required, running time).
\end{theorem}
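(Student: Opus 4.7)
The plan is to give a generic compilation: given an oblivious algorithm $\mathbb{A}_\Qcal$, build a secure algorithm $\mathbb{A}'_\Qcal$ that behaves identically to $\mathbb{A}_\Qcal$ except that every value written to UM is first IND-CPA encrypted inside TM, and every value read from UM is decrypted inside TM. The encryption key is generated once by $\textsf{Gen}(1^k)$ and stored in TM (constant overhead). Since one-way functions exist, so do pseudorandom functions and hence IND-CPA secure (stateful or randomized) symmetric encryption; we use such a scheme for all writes to UM, including the input database re-encryption that $\mathbb{A}'_\Qcal$ sees in encrypted form on input. The access pattern of $\mathbb{A}'_\Qcal$ is exactly that of $\mathbb{A}_\Qcal$ (with perhaps a constant-factor blowup per memory access to store an authenticated ciphertext), so both TM memory and running time match the oblivious algorithm up to constants.

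To argue security, I would use a hybrid argument over the indistinguishability experiment of Section~\ref{sec:sqp-def}. Let the adversary's view consist of the trace $\tau_b$ together with $\textsf{Enc}_K(D^{(b)})$ and $\textsf{Enc}_K(Q_b(D^{(b)}))$. Define:
\begin{itemize}
\item $H_0$: the real view when $b=0$, i.e.\ real trace $\tau_0$ on $(D^{(1)},Q_1)$ with real ciphertexts throughout.
\item $H_1$: same access pattern as $H_0$, but every ciphertext appearing in the trace and in the given database/result encryptions is replaced by an encryption of a fixed dummy plaintext of the same length.
\item $H_2$: identical to $H_1$ but the access pattern is now sampled from the distribution $\Mcal_{\mathbb{A}_\Qcal}(\textsf{Enc}_K(D^{(2)}),Q_2)$.
\item $H_3$: the real view when $b=1$.
\end{itemize}
Steps $H_0 \to H_1$ and $H_2 \to H_3$ are computationally indistinguishable by a standard multi-message IND-CPA hybrid (the number of replaced ciphertexts is polynomial in $n$, so a polynomial-step hybrid suffices). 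The step $H_1 \to H_2$ is information-theoretically identical: by the obliviousness hypothesis, the access pattern distribution depends only on the per-relation sizes $|R_i^{(b)}|$ and the output size $|Q_b(D^{(b)})|$, and the experiment requires these to be equal across the two choices. Composing the three steps bounds any polynomial-time adversary's advantage by a negligible function $\mathit{negl}(k)$.

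The main subtlety is that the trace records values read/written, not merely locations, so obliviousness alone is not enough: the hybrid must eliminate the actual ciphertext payloads before the access-pattern substitution can be invoked. This is why the IND-CPA replacement in $H_0 \to H_1$ must cover every write performed by $\mathbb{A}'_\Qcal$, including re-encryptions of tuples streamed through TM, and why we need IND-CPA for many messages under a shared key, obtained via the standard PRF-based construction from one-way functions. A minor technical point is ensuring that the re-encryption of the input database by $\mathbb{A}'_\Qcal$ (needed because the input ciphertexts themselves appear in the trace when fetched) is handled under the same scheme; using fresh randomness (or a counter) per encryption ensures the multi-message hybrid goes through cleanly. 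Once these are in place, the theorem follows.
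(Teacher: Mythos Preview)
Your proposal is correct and takes essentially the same approach as the paper: encrypt every value written to UM with IND-CPA symmetric encryption (which exists given one-way functions) so that the trace values leak nothing, and then invoke obliviousness to argue the access-location distributions are identical across the two worlds. The paper itself gives only an informal one-paragraph sketch pointing to the analogous Theorem~3.1.1 of Goldreich--Ostrovsky; your explicit four-step hybrid ($H_0\to H_1$ and $H_2\to H_3$ by multi-message IND-CPA, $H_1\to H_2$ by the equal-size obliviousness hypothesis) is exactly the standard way to make that sketch rigorous, and you correctly flag the point that ciphertext payloads must be eliminated before invoking obliviousness on the location sequence.
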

The idea of using obliviousness to derive security from access pattern
leakage was originally proposed
in~\cite{DBLP:journals/jacm/GoldreichO96} and the proof of
Theorem~\ref{thm:obl} is similar to the proof of analogous Theorem
3.1.1 in \cite{DBLP:journals/jacm/GoldreichO96}. Informally, we get
secure query processing by ensuring both data security of values
stored in untrusted memory and access pattern obliviousness. Data
security can be achieved by using encryption, and secure encryption
schemes exist assuming the existence of one-way functions. It follows
that the existence of oblivious query processing algorithms implies
the existence of secure algorithms. Based on Theorem~\ref{thm:obl},
the rest of the paper focuses on oblivious query processing and does
not directly deal with encryption and data security.


\section{Intuition}
\label{sec:overview}

\begin{figure}[t]
\centering
\includegraphics[trim=10ex 60ex 20ex 45ex, width=3in]{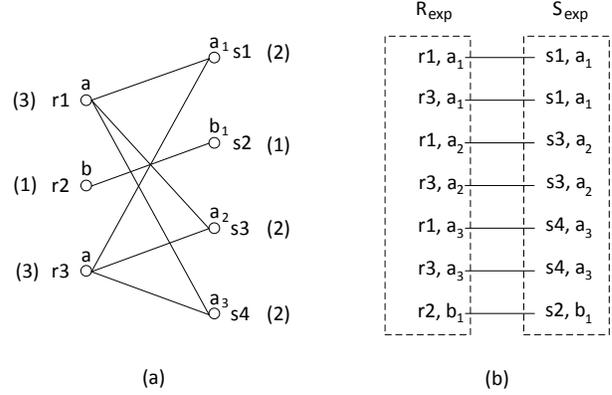}
\caption{Illustration of Oblivious Binary Join}
\label{fig:binjoin}
\end{figure}

This section presents a high level intuition behind our algorithms.
Consider the binary join $R (A, \ldots) \join S (A, \ldots)$ and the
join graph instance shown in Figure~\ref{fig:binjoin}(a). Lower case
letters $a$, $b$, represent values of the joining column $A$; ignore
the subscripts on $a$ and $b$ for now. We add identifiers $r_1$-$r_3$
and $s_1$-$s_4$ to tuples so that we can refer to them in text.

Our oblivious binary join algorithm works in two\linebreak stages: In the first
stage, we compute the contribution of each tuple to the final
output. This is simply the degree of the tuple in the join graph; this
value is shown within parenthesis in Figure~\ref{fig:binjoin}(a). For
example, the degree of $r_1$ is $3$, and degree of $r_2$, $1$. In the
second stage, we \emph{expand} $R$ to $R_{exp}$ by duplicating each
tuple as many times as its degree; $r_1$ occurs $3$ times in
$R_{exp}$, $r_2$ once, and so on. We similarly, expand $S$ to
$S_{exp}$. The expansions $R_{exp}$ and $S_{exp}$ are shown within
boxed rectangles in Figure~\ref{fig:binjoin}(b). The final join output
is produced by ``stitching'' together $R_{exp}$ and $S_{exp}$ as
illustrated in Figure~\ref{fig:binjoin}(b). The expansions $R_{exp}$
and $S_{exp}$ are \emph{sequences} whose ordering is picked carefully
to ensure that stitching the $i$th tuple in $R_{exp}$ with the $i$th
tuple in $S_{exp}$ indeed produces the correct join output.

A central component of the above algorithm are oblivious
implementations of two simple primitives that we call \emph{semi-join
  aggregation} and \emph{expansion}. Semi-join aggregation computes
the degree of each tuple in a join and expansion expands a relation by
duplicating each tuple a certain number of times such as its degree.

The same approach generalizes to multiway joins if the overall query
is \emph{acyclic}~\cite{DBLP:conf/pods/PaghP06}. Informally, to
compute $R \join S \join T$, we would compute the contribution of each
tuple to the \emph{final} join output and use these values to expand
input relations to $R_{exp}$, $S_{exp}$, and $T_{exp}$, which are then
stitched together to produce the final join output.


\section{Primitives}
\label{sec:prim}
This section introduces a few core primitives and\linebreak presents oblivious
algorithms---algorithms that have the same UM memory access patterns
once we fix input and output sizes---for these primitives. These
primitives serve two purposes: First, as discussed in
Section~\ref{sec:overview}, they are building blocks for our oblivious
query processing algorithms; Second, they introduce notation to help
us concisely specify our algorithms, and reason about their
obliviousness and performance.

There exist oblivious algorithms for all the primitives of this
section having time complexity $O (n \log n)$ and requiring $O(\log
n)$ TM memory, where $n$ denotes the sum of input and output
sizes. Some of these algorithms rely on an oblivious sort; an optimal
$O (n \log n)$ oblivious sort algorithm that uses $O(1)$ TM memory is
presented in~\cite{DBLP:journals/jacm/Goodrich11}. Due to space
constraints we defer presenting oblivious algorithms for the simpler
primitives to the full version of the paper~\cite{full-version}.

\begin{figure}
\small
\centering
\begin{tabular}{ccc}
\begin{tabular}{c|c|} \cline{2-2}
& $A$ \\ \cline{2-2}
$r_1$ & $1$ \\ \cline{2-2}
$r_2$ & $2$ \\ \cline{2-2}
$r_3$ & $1$ \\ \cline{2-2}
\end{tabular}
    &     
\begin{tabular}{c|c|c|} \cline{2-3}
      & $A$ & $B$ \\ \cline{2-3} 
$s_1$ & $1$ & $1$ \\ \cline{2-3}
$s_2$ & $2$ & $1$ \\ \cline{2-3}
$s_3$ & $2$ & $1$ \\ \cline{2-3}
\end{tabular}

    & 
\begin{tabular}{|c|c|c|c|c|} \hline
$A$ & $B$ & $C$ & $D$ & $E$ \\ \hline
$1$ & $2$ & $1$ & $2$ & $1$ \\ \hline
$2$ & $4$ & $1$ & $6$ & $2$ \\ \hline
$1$ & $2$ & $2$ & $8$ & $1$ \\ \hline
\end{tabular}

\\

$R$ & $S$ & $\Rtil$
\end{tabular}
\caption{Illustration of primitives: $\Rtil \gets R.(B \gets 2A).(C
  \gets \idfunc_A).(D \gets \rsfunc (B)).(E \stackrel{\ltimes}{\gets} \mathit{Sum}(S.B))$. $r_1$-$r_3$ and
$s_1$-$s_3$ are names we use to refer to the tuples.}
\label{fig:primex}
\end{figure}

\vspace{1ex}\noindent\textbf{Relation Augmentation: }
This primitive adds a derived column to a relation. In the simplest
form the derived column is obtained by applying a function to existing
columns; many primitives we introduce subsequently are more complex
instantiations of relation augmentation. We use the notation $R.(A
\gets \mathit{Func})$ to represent relation augmentation which adds a
new derived column $A$ using some function $\mathit{Func}$ and
produces an output relation with schema $\attrs(R) \union \{A\}$. For
example, $R.(B \gets 2A)$ adds a new column $B$ whose value is twice
that of $A$ (see Figure~\ref{fig:primex}). Our notation allows
composition to be expressed more concisely; e.g., $R.(B \gets 2A).(F
\gets A + B)$. 

\vspace{1ex}\noindent\textbf{Grouping Identity: } This relation
augmentation primitive adds a new \emph{identity} column within a
group; identity column values are of the form $1, 2, \ldots$.  In
particular, we use the notation $R.(A \gets \idfunc_{\Gcal}^{\Ocal})$
where $\Gcal \subseteq \attrs(R)$ is a set of grouping columns and
$\Ocal \subseteq \attrs(R)$ is a set of ordering columns. To get the
output, we partition the tuples by the grouping columns $\Gcal$, order
the tuples within each partition by $\Ocal$, and assign ids based on
this ordering. (We break ties arbitrarily, so the output can be
non-deterministic.) $\Gcal$ and $\Ocal$ can be empty and omitted. For
example, $R.(Id \gets \idfunc)$ assigns an unique id to each record in
$R$. In Figure~\ref{fig:primex}, for $R.(C \gets \idfunc_A)$, we
partition by $A$, so $r_1$ and $r_3$ go to the same partition; tuple
$r_1$ gets a $C$ value of $1$, and $r_3$, a $C$ value of $2$.

\vspace{1ex}\noindent\textbf{Grouping Running Sum: }
This primitive is a generalization of grouping identity and adds a
running sum column to a relation. It is represented\linebreak $R.(A \gets
\rsfunc_{\Gcal}^{\Ocal} (B))$; it groups a relation by $\Gcal$ and
orders tuples in a group by $\Ocal$ and stores the running sum of $B$
column values in a new column $A$. In particular, grouping identity
$R.(\mathit{Id} \gets \idfunc_{\Gcal}^{\Ocal})$ can be expressed as
$R.(X \gets 1).(\mathit{Id} \gets \rsfunc_\Gcal^\Ocal (X))$. See
Figure~\ref{fig:primex} for an example.

\vspace{1ex}\noindent\textbf{Generalized Union: } A \emph{generalized
  union} of $R$ and $S$, denoted $R \gunion S$, produces a relation
with schema $\attrs(R) \union \attrs(S)$ that contains tuples from
both $R$ and $S$. Tuples of $R$ have a \emph{null} value for
attributes in $\attrs(S) - \attrs(R)$, and those of $S$, a null value
for attributes in $\attrs(R) - \attrs(S)$.

\vspace{1ex}\noindent\textbf{Sequences: Sorting and Stitching: }
Although the inputs and outputs of our algorithms are relations
represented as \emph{sequences}, the ordering is often unimportant and
we mostly do not emphasize the sequentiality. We use the notation
$\langle R \rangle$ to represent some sequence corresponding to
$R$. When a particular ordering is desired, we represent the ordering
as $\langle R \rangle_\Ocal$ where $\Ocal \subseteq \attrs (R)$ denote
the ordering attributes.

One operation on sequences that cannot be represented over bags is
``stitching'' two sequences of the same length (see
Figure~\ref{fig:binjoin}(b) for an example): Given two sequences
$\langle R \rangle$ and $\langle S \rangle$ of the same length $n$,
the operation $\langle R \rangle \cdot \langle S \rangle$ produces a
sequence of length $n$ with schema $\attrs(R) \union \attrs(S)$ and
the $i$th tuple of the sequence is obtained by concatenating the $i$th
tuple of $\langle R \rangle$ and the $i$th tuple of $\langle S
\rangle$; we ensure when invoking this operation that the $i$th tuples
of both sequences agree on $\attrs(R) \isect \attrs(S)$ if the
intersection is nonempty.

\vspace{1ex}\noindent\textbf{Filters:} Consider the filter $\select_P
(R)$. The simple algorithm that scans each tuple $t \in R$, checks if
it satisfies $P$, and outputs it if does, is \emph{not}
oblivious. (E.g., simply reordering tuples in $R$ changes the memory
write pattern.)

The oblivious sorting algorithm can be used to design a simple
oblivious algorithm for selection (filter). To evaluate $\select_P
(R)$, we sort $R$ such that tuples that satisfy predicate $P$ occur
before tuples that do not. We scan the sorted table and output the
tuples that satisfy $P$ and stop when we encounter the first tuple
that does not satisfy $P$. The overall data access pattern depends
only on input and output sizes and is therefore oblivious.

\subsection{Semi-Join Aggregation}
\label{sec:semi-join-aggr}
Semi-join aggregation, denoted $R.(A \stackrel{\ltimes}{\gets}
\mathit{Sum}(S.B))$, is equivalent\footnote{This equality holds only
  when $R$ has not duplicates.} to the relational algebra expression\linebreak
$\groupagg_{\attrs(R)}^{A \gets \mathrm{SUM}(S.B)} (R \join S)$. This
operation adds a new derived column $A$; for each tuple $t_R \in R$,
we obtain value of $A$ by identifying all $t_S \in S$ that join with
$t_R$ (agree on all common attributes $\attrs (R) \isect \attrs (S)$)
and summing over $t_S [B]$ values. As discussed in
Section~\ref{sec:overview}, we introduce this primitive to compute the
degree of a tuple in a join graph. In particular, the degree of each
$R$ tuple in $R \join S$ is obtained by $\Stil \gets S.(X \gets 1)$,
$R.(\mathit{Degree} \stackrel{\ltimes}{\gets}
\mathit{Sum}(\Stil.X))$. In Figure~\ref{fig:primex}, $r_2$ joins with
two tuples $s_2$ and $s_3$, so $r_2[E]$ is $s_2[B] + s_3[B] = 2$.

\begin{algorithm}[t]
  \small
  \caption{Semi-Join Aggregation: $R.(X \stackrel{\ltimes}{\gets} \mathit{Sum}(S.Y))$}
  \label{alg:semi-join-agg}
  \begin{algorithmic}[1]
    \Procedure{SemiJoinAgg}{$R, S, X, Y$}
      \State $\Rtil \gets R.(\mathit{Src} \gets 1).(Y \gets 0)$ \label{stp:lineageR}
      \State $\Stil \gets S.(\mathit{Src} \gets 0)$ \label{stp:lineageS}
      \State $U \gets \Rtil\ \gunion\ \Stil$ \label{stp:sjgenunion}
      \State $U \gets U.(X \gets \rsfunc_{\attrs (R) \isect
        \attrs(S)}^{\mathit{Src}} (Y))$  \label{stp:sja-rsum}
      \State Output $\project_{\attrs (R), X} (\select_{\mathit{Src} =
        1} (U))$\label{stp:disunion}
    \EndProcedure
  \end{algorithmic}
\end{algorithm}

\vspace{1ex}\noindent\textbf{Oblivious Algorithm:}
Algorithm~\ref{alg:semi-join-agg} presents an oblivious algorithm for
semi-join aggregation\linebreak $R.(X \stackrel{\ltimes}{\gets}
\mathit{Sum}(S.Y))$. (In all our algorithms, each step involves one of
our primitives and is implemented using the oblivious algorithm for
the primitive.) It adds a ``lineage'' column $\mathit{Src}$ in Steps
\ref{stp:lineageR} and \ref{stp:lineageS}; the value of $\mathit{Src}$
is set to $1$ for all $R$ tuples and $0$ for all $S$ tuples. A $Y$
column initialized to $0$ is added to all $R$ tuples. Step
\ref{stp:sjgenunion} computes a generalized union $U$ of $\Rtil$ and
$\Stil$. Adding the running sum within each $\attrs(R) \isect
\attrs(S)$ group adds the required aggregation value into each $R$
tuple (Step~\ref{stp:sja-rsum}); the running sum computation is
ordered by $\mathit{Src}$ to ensure that all $S$ tuples within an
$\attrs(R) \isect \attrs(S)$ group occur before the $R$
tuples. Finally, the oblivious filter $\select_{\mathit{Src} = 1}$ in
Step~\ref{stp:disunion} extracts the $R$ tuples from
$U$. Figure~\ref{fig:exsja} shows the intermediate tables generated by
Algorithm~\ref{alg:semi-join-agg} for sample tables $R (Id, A)$ and $S
(A, Y)$.

\begin{figure}
\small
\centering
\begin{tabular}{ccc}

\hspace{-3ex}
\begin{tabular}{|c|c|c|c|} \hline
$\mathit{Id}$ & $A$ & $Y$ & $\mathit{Src}$ \\ \hline
1 & a & 0 & 1 \\ \hline
2 & b & 0 & 1 \\ \hline
\end{tabular}

&

\hspace{-3ex}
\begin{tabular}{|c|c|c|} \hline
 $A$ & $Y$ & $\mathit{Src}$ \\ \hline
 a & 2 & 0 \\ \hline
 b & 3 & 0 \\ \hline
 a & 4 & 0 \\ \hline
\end{tabular}

&

\hspace{-3ex}
\begin{tabular}{|c|c|c|c|c|} \hline
$\mathit{Id}$ & $A$ & $Y$ & $\mathit{Src}$ & $X$ \\ \hline
- & a & 2 & 0 & 2 \\ \hline
- & a & 4 & 0 & 6 \\ \hline
1 & a & 0 & 1 & 6 \\ \hline
- & b & 3 & 0 & 3 \\ \hline
2 & b & 0 & 1 & 3 \\ \hline
\end{tabular}

\\

(a): $\Rtil$ & (b): $\Stil$ & (c): $U$
\end{tabular}
\caption{Sample computation of $R.(X \stackrel{\ltimes}{\gets} \mathit{Sum}(S.Y))$}
\label{fig:exsja}
\end{figure}

\begin{theorem}
\label{thm:ob-sja}
Algorithm~\ref{alg:semi-join-agg} obliviously computes semi-join
aggregation $R.(X \stackrel{\ltimes}{\gets} \mathit{Sum}(S.Y))$ of two
tables in $O ((n_R + n_S) \log (n_R + n_S))$ time and using
$O(1)$ TM memory, where $n_R = |R|$ and $n_S = |S|$ denote the input
table sizes.
\end{theorem}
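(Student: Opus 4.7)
The plan is to establish three properties of Algorithm~\ref{alg:semi-join-agg}: correctness, obliviousness, and the claimed complexity bounds. Each of these follows from a short argument once the structure of the algorithm is unpacked.

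For correctness, I would focus on the value of $X$ attached to each $R$-tuple after Step~\ref{stp:sja-rsum}. Fix a tuple $t_R \in R$ and let $g = t_R[\attrs(R)\cap\attrs(S)]$ be its join-key value. Within the generalized union $U$, the tuples with key $g$ consist of all $S$-tuples $t_S$ that join with $t_R$ (with $\mathit{Src}=0$) together with all $R$-tuples sharing the same key (with $\mathit{Src}=1$ and $Y=0$). Because the running sum is grouped by $\attrs(R)\cap\attrs(S)$ and ordered by $\mathit{Src}$ ascending, every $S$-tuple in the group is processed before any $R$-tuple. Thus by the time the running sum reaches $t_R$ it has accumulated exactly $\sum_{t_S\text{ joins }t_R} t_S[Y]$; and since additional $R$-tuples in the same group contribute $Y=0$, all of them receive the same correct aggregate. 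The final projection then retains precisely the $R$-tuples with the desired $X$ values.

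For obliviousness, I would appeal to closure under composition: every step invokes one of the primitives catalogued at the start of Section~\ref{sec:prim}, each of which is itself oblivious. It suffices to show that the sizes of the intermediate tables are completely determined by $(n_R,n_S)$, so that the composed access pattern only depends on these input sizes. This is straightforward: $|\Rtil|=n_R$, $|\Stil|=n_S$, $|U|=n_R+n_S$, and the output of the filter $\select_{\mathit{Src}=1}$ has exactly $n_R$ tuples (an output size fixed a priori, which the oblivious filter needs in order not to leak via its trace). Since each primitive's trace distribution depends only on its input (and output) size, the concatenated trace distribution of Algorithm~\ref{alg:semi-join-agg} depends only on $(n_R,n_S)$, which is the condition for obliviousness.

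For the complexity, note that all intermediate tables have size at most $n_R+n_S$, so every primitive invocation runs in $O((n_R+n_S)\log(n_R+n_S))$ time and uses $O(1)$ TM memory by the stated properties of the underlying primitives (most of which reduce to the oblivious sort of~\cite{DBLP:journals/jacm/Goodrich11}). Summing the constant number of steps gives the claimed bound. The main obstacle, in my view, is the obliviousness argument rather than correctness: one has to be careful that the oblivious filter in Step~\ref{stp:disunion} does not need to know a data-dependent output size, which is why it is important to observe that the filter always emits exactly $n_R$ tuples regardless of the actual join structure. Once that subtlety is handled, the rest is essentially bookkeeping over a constant-depth composition.
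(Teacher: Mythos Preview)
Your proposal is correct and follows essentially the same approach as the paper: the paper's (brief) proof sketch argues that each step of Algorithm~\ref{alg:semi-join-agg} is locally oblivious with input and output sizes drawn from $\{n_R,\,n_S,\,n_R+n_S\}$, hence the composition is oblivious and uses $O(1)$ TM memory. Your write-up is more thorough---you add the correctness argument and explicitly flag that the filter in Step~\ref{stp:disunion} always emits exactly $n_R$ tuples---but the underlying strategy is the same.
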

\begin{proof} (Sketch) For each step of
  Algorithm~\ref{alg:semi-join-agg} the input and output sizes are one
  of $n_R$, $n_S$, and $n_R + n_S$ and each step is locally oblivious
  in its input and output sizes. The overall algorithm is therefore
  oblivious. Further, the oblivious algorithms for each step require
  $O(1)$ TM memory.
\end{proof}

\subsection{Expansion}
\label{sec:expansion-1}
This primitive duplicates each tuple of a relation a number of times
as specified in one of the columns. In particular, the output of
$\expand_W (R)$, $W \in \attrs(R)$ and $\domain{W} \subseteq
\mathbb{N}$ is a relation instance with same schema, $\attrs(R)$, that
has $t [W]$ copies of each tuple $t \in R$. For example, given an
instance of $R(A,W): \{ \langle a, 1 \rangle, \langle b, 2 \rangle
\}$, $\expand_W (R)$ is given by $\{ \langle a, 1 \rangle, \langle b,
2 \rangle, \langle b, 2 \rangle \}$. As discussed in
Section~\ref{sec:overview}, expansion plays a central role in our join
algorithms.

\subsubsection{Oblivious Algorithm }
\label{sec:expansion}
We now present an oblivious algorithm to compute $\expand_W
(R)$. For presentational simplicity, we slightly modify the
representation of the input. The modified input to the expansion is a
sequence of pairs $(\langle r_1, w_1 \rangle,$ $\ldots, \langle r_n, w_n
\rangle)$, where $r_i$s are values (tuples) drawn from some domain and
$w_i \in \mathbb{N}$ are non-negative \emph{weights}. The desired
output is some sequence containing (in any order) $w_i$ copies of each
$r_i$. We call such a sequence a \emph{weighted sequence}.

The input size of expansion is $n$ and the output size is $m
\stackrel{\mathrm{def}}{=} \sum_{i=1}^n w_i$, so memory access pattern
of an oblivious algorithm depends on only these two quantities. The
naive algorithm that reads each $\langle r_i, w_i \rangle$ into TM and
writes out $w_i$ copies of $r_i$ is not oblivious, since the output
pattern depends on individual weights $w_i$.

We first present an oblivious algorithm when the input sequence has a
particular property we call \emph{prefix-heavy}; we use this algorithm
as a subroutine in the algorithm for the general case.
\begin{definition}
  \label{def:prefix-heavy}
  A weighted sequence $(\langle r_1, w_1 \rangle,$ $\ldots,$\linebreak
  $\langle r_n, w_n \rangle)$ is \emph{prefix-heavy} if for each $\ell
  \in [1,n]$,\linebreak $\frac{1}{\ell} \sum_{i = 1}^\ell w_i \geq \frac{1}{n}
  \sum_{i = 1}^n w_i$.
\end{definition}
The average weight of any prefix of a prefix-heavy sequence is
greater-than-or-equal to the overall average weight. Any weighted
sequence can be reordered to get a prefix-heavy sequence, e.g., by
sorting by non-decreasing weight. The sequence $( \langle a, 4
\rangle, \langle b, 1 \rangle,$ $\langle c, 2 \rangle )$ is
prefix-heavy, while $( \langle b, 1 \rangle,$ $\langle a, 4 \rangle,
\langle c, 2 \rangle )$ is not.
\begin{algorithm}[t]
  \small
  \caption{Oblivious Expansion of prefix heavy sequences}
  \label{alg:obl-exp-ph}
  \begin{algorithmic}[1]
    \Procedure{ExpandPrefixHeavy}{$I$}
      \Statex \textbf{Assume:} $I = ( \langle r_1, w_1 \rangle, \ldots,
    \langle r_{n}, w_{n} \rangle )$
      \Statex \textbf{Require:} $I$ is prefix heavy
      \State $w_{\mathit{avg}} \gets (\sum_{i=1}^{n} w_{i}) / n$ 
      \State $\Ccal_{TM} \gets \phi$ \Comment{counters within TM}
      \ForIter{$i = 1$}{$n$}
        \State Read $\langle r_i, w_i \rangle$ to TM. \label{stp:input}
        \State $w_{\mathit{curr}} \gets \lfloor i \cdot w_{\mathit{avg}}
        \rfloor - \lfloor (i - 1)  \cdot w_{\mathit{avg}} \rfloor$ \label{stp:wcurr}
        \If{$w_i \leq w_{\mathit{curr}}$} 
          \State Append $w_i$ copies of $r_i$ to output
          \State $w_{\mathit{curr}} \gets w_{\mathit{curr}} - w_i$
        \Else 
          \State $\Ccal_{TM} [r_i] \gets w_i$
        \EndIf
        \While{$w_{\mathit{curr}} > 0$}
          \State $r_j \gets \mathrm{argmin}_k$ $r_k$ has a counter in
          $\Ccal_{TM}$ 
          \If{$\Ccal_{TM} [r_j] > w_{\mathit{curr}}$}
            \State Append $w_{\mathit{curr}}$ copies of $r_j$ to output
            \State $\Ccal_{TM} [r_j] \gets \Ccal_{TM} [r_j] -
            w_{\mathit{curr}}$
            \State $w_{\mathit{curr}} \gets 0$
          \Else \label{stp:zerobegin}
            \State Append $\Ccal_{TM} [r_j]$ copies of $r_j$ to
            output
            \State $w_{\mathit{curr}} \gets w_{\mathit{curr}} -
            \Ccal_{TM} [r_j]$
            \State // $\Ccal_{TM} [r_j] \gets 0$
            \State Remove $r_j$ from $\Ccal_{TM}$
          \EndIf \label{stp:zeroend}
        \EndWhile
      \EndForIter
    \EndProcedure
  \end{algorithmic}
\end{algorithm}

Algorithm~\ref{alg:obl-exp-ph} presents an oblivious algorithm for
expanding prefix heavy weighted sequences.  To expand the sequence $I
= ( \langle r_1, w_1 \rangle,$ $\ldots, \langle r_{n}, w_{n} \rangle
)$, the algorithm proceeds in $n$ (input-output) steps. Let
$w_{\mathit{avg}} = (\sum_{i=1}^n w_i)/n$ denote the average weight of
the sequence.  In each step, it reads one weighted record
(Step~\ref{stp:input}) and produces $w_{\mathit{avg}}$ (unweighted)
records in the output; the actual number $w_{\mathit{curr}}$ is either
$\lfloor w_{\mathit{avg}} \rfloor$ or $\lceil w_{\mathit{avg}}
\rceil$, when $w_{\mathit{avg}}$ is fractional (Step~\ref{stp:wcurr}).

Call a record $\langle r_i, w_i \rangle$ \emph{light} if $w_i \leq
w_{\mathit{avg}}$ and \emph{heavy}, otherwise. If the current record
$\langle r_i, w_i \rangle$ is light, $w_i$ copies of $r_i$ are
produced in the output; if it is heavy, a counter $\Ccal[r_i]$ is
initialized with count $w_i$ denoting the number of copies of $r_i$
available for (future) outputs. Previously seen heavy records are used
to make up the ``balance'' and ensure $w_{\mathit{avg}}$ records are
produced in each step. The counters $\Ccal$ are internal to TM and are
not part of the data access pattern. Figure~\ref{fig:obl-exp-ph} shows
the steps of Algorithm~\ref{alg:obl-exp-ph} for the sequence $(
\langle a, 4 \rangle, \langle b, 1 \rangle, \langle c, 2 \rangle )$.

\begin{figure}
\small
\centering
\begin{tabular}{|c|c|c|c|} \hline
Step & Input & Output & Counters \\ \hline
1 & $\langle a, 4 \rangle$ & $a, a$ & $\Ccal[a] = 2$ \\ \hline
2 & $\langle b, 1 \rangle$ & $b, a, a$ & $\Ccal = \phi$ \\ \hline
3 & $\langle c, 2 \rangle$ & $c, c$ & $\Ccal = \phi$ \\ \hline
\end{tabular}
\caption{Algorithm~\ref{alg:obl-exp-ph} over sequence $( \langle a, 4
\rangle, \langle b, 1 \rangle, \langle c, 2 \rangle )$}
\label{fig:obl-exp-ph}
\end{figure}

Algorithm~\ref{alg:obl-exp-ph} is oblivious since its input-output
pattern is fixed once the input size $n$ and output size $m = \sum_{i
  =1}^n w_i$ is fixed. Note that $w_{\mathit{avg}} = m/n$ is fixed
once input and output sizes are fixed. 

In the worst case, the number of counters maintained by
Algorithm~\ref{alg:obl-exp-ph} can be $\Omega (n)$.
\begin{example}
\label{ex:algph-worst}
Consider the sequence $w_1 = \cdots = w_{n/4} = 4$ and $w_{n/2+1} =
\cdots = w_n = 0$. After reading $n/4$ records, we can show that
Algorithm~\ref{alg:obl-exp-ph} requires $\approx 3n/16$ counters.
\end{example}
However, any weighted sequence can be re-ordered so that it is prefix
heavy and the number of counters used by
Algorithm~\ref{alg:obl-exp-ph} is $O(1)$ as stated in
Lemma~\ref{lem:space-order} and illustrated in the following example.
\begin{example}
\label{ex:bph}
We can reorder the weight sequence in Example~\ref{ex:algph-worst} as
$\langle 4, 0, 0, 0, 4, 0, 0, 0, \ldots, \rangle$ interleaving $3$
light records inbetween two heavy records. We can show that
Algorithm~\ref{alg:obl-exp-ph} requires just one counter for this
sequence.
\end{example}
More generally, the basic idea is to interleave sufficient number of
light records between two heavy records so that average weight of any
prefix is barely above the overall average, which translates to fewer
number of counters. In Example~\ref{ex:bph}, we can suppress just one
heavy record to make the average weight of any prefix $\leq
w_{\mathit{avg}}$. We call such sequences \emph{barely prefix heavy}.

\begin{lemma}
\label{lem:space-order}
Any weighted sequence $I$ can be re-ordered as a prefix heavy sequence
$I'$ such that Algorithm~\ref{alg:obl-exp-ph} requires $O(1)$ counters
to process $I'$.
\end{lemma}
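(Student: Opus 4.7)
The plan is to give an explicit reordering procedure, then verify separately that it is prefix heavy and that Algorithm~\ref{alg:obl-exp-ph} keeps only $O(1)$ counters while processing it. Let $w_{\mathit{avg}} = (\sum_i w_i)/n$ and call a record \emph{heavy} if $w_i > w_{\mathit{avg}}$ (with excess $e_i = w_i - w_{\mathit{avg}}$) and \emph{light} otherwise (with deficit $d_j = w_{\mathit{avg}} - w_j \geq 0$). Because the total weight is $n \cdot w_{\mathit{avg}}$, we have $\sum_i e_i = \sum_j d_j$. I would build $I'$ incrementally: place a heavy record, then greedily append light records (in non-increasing deficit order, say) as long as the partial sum stays $\geq \ell \cdot w_{\mathit{avg}}$ at each position $\ell$; as soon as the next candidate light record would violate this invariant, place the next heavy record instead, and repeat. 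When the heavy records are exhausted, append whatever light records remain (all of which will have deficit $0$, by the balance equation).

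Prefix heaviness then falls out of the stopping rule: a light record is appended only when doing so preserves the invariant, and a heavy record's positive excess can only strengthen the inequality. For the counter bound, the key claim I would prove is that just before each fresh heavy placement $h_{i+1}$ the total mass currently held in counters --- which equals $S_\ell - \ell \, w_{\mathit{avg}}$, the ``residual'' left over from the previous heavy $h_i$ --- is strictly smaller than the smallest remaining light deficit; otherwise the greedy rule would have chosen that light record rather than promoting a new heavy one. Because Algorithm~\ref{alg:obl-exp-ph} selects counters via $\arg\min_k$, counters drain oldest-first; hence the very next light record processed after $h_{i+1}$ first fully empties $h_i$'s residual (its deficit exceeds that residual) and only then touches $h_{i+1}$'s counter. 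Consequently at most two heavy-record counters are ever simultaneously active, giving $O(1)$ TM counter slots, exactly as suggested by Example~\ref{ex:bph}.

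The hard part will be making the boundary bookkeeping airtight. I anticipate two subtleties. First, I must rule out the greedy rule ``getting stuck'' with a residual that survives across several consecutive heavy placements; for this I would use the global identity $\sum_i e_i = \sum_j d_j$ together with an inductive argument showing that the residual after each drain phase is bounded by a single light deficit, so in particular by $w_{\mathit{avg}}$. Second, I must account for the rounding in the per-step quota $w_{\mathit{curr}} = \lfloor i \cdot w_{\mathit{avg}} \rfloor - \lfloor (i-1)\cdot w_{\mathit{avg}} \rfloor$, which contributes a $\pm 1$ slack per step to every partial-sum comparison; this enlarges the residual bound by an additive constant but does not affect the asymptotic $O(1)$ count. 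Once the residual invariant is established, an argument analogous to the one used for Theorem~\ref{thm:ob-sja} shows the construction is realized with $O(1)$ TM memory.
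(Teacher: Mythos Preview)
Your greedy interleaving is exactly the ``barely prefix heavy'' construction the paper sketches (and defers to the full version), so the overall strategy is right. Two points in the write-up need tightening.

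First, the parenthetical ``all of which will have deficit $0$, by the balance equation'' is false. With weights $3,3,1,1$ and $w_{\mathit{avg}}=2$ your rule produces $3,1,3,1$, and the tail light has deficit $1$. What \emph{is} true is that once the heavies are exhausted the current surplus $S_\ell - \ell\,w_{\mathit{avg}}$ equals the sum of the remaining deficits, so the largest remaining deficit (your next candidate, given the non-increasing order) is at most the surplus; hence each remaining light can be appended without violating prefix heaviness, and induction finishes. Relatedly, a short argument shows the lights cannot be exhausted while heavies remain: if they were, the remaining heavies would each exceed $w_{\mathit{avg}}$, forcing $S_\ell < \ell\,w_{\mathit{avg}}$ and contradicting prefix heaviness. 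This guarantees that every heavy after the first is placed \emph{because} a candidate light was rejected, which is what you need below.

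Second, the counter-drain timing is slightly off, and the correct version is cleaner. Just before you place $h_{i+1}$ at position $\ell+1$, the rejected light gives $S_\ell + w_j < (\ell+1)w_{\mathit{avg}}$, so $S_\ell < (\ell+1)w_{\mathit{avg}}$; since $S_\ell$ is an integer, $S_\ell \le \lfloor (\ell+1)w_{\mathit{avg}}\rfloor$, hence the total counter mass $S_\ell - \lfloor \ell\,w_{\mathit{avg}}\rfloor$ is at most $\lfloor (\ell+1)w_{\mathit{avg}}\rfloor - \lfloor \ell\,w_{\mathit{avg}}\rfloor = w_{\mathit{curr}}$. Thus all old counters are fully drained \emph{during} step $\ell+1$ itself, not during the next light as you wrote. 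By induction there is at most one counter alive after every step and at most two during a step. This dispatches the ``consecutive heavies'' worry without a separate case, and shows the $\pm 1$ rounding slack never accumulates.
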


Lemma~\ref{lem:space-order} suggests that we can design a general
algorithm for expansion by first reordering the input sequence to be
barely prefix heavy and using Algorithm~\ref{alg:obl-exp-ph}. The main
difficulty lies in \emph{obliviously} reordering the sequence to make
it barely prefix heavy. We do not know how to do this directly;
instead, we transform the input sequence to a modified sequence by
\emph{rounding} weights (upwards) to be a power of $2$. We can
concisely represent the full rounded weight distribution using
logarithmic space. We store this rounded distribution within TM and
use it to generate a barely prefix heavy sequence.  Details of the
algorithm to reorder a sequence to make it barely prefix heavy is
presented in the full-version of the paper~\cite{full-version}.

\begin{algorithm}[t]
  \small
  \caption{Oblivious Expansion $\expand_W (R)$}
  \label{alg:obl-exp}
  \begin{algorithmic}[1]
    \Procedure{Expand}{$R, W$}
      \State $m \gets \groupagg^{\mathrm{SUM}(W)} $ \Comment{output
        size}
      \State $\Rtil \gets R.(\Wtil \gets 2^{\lceil \log_2 W \rceil})$
      \Comment{weight rounding} \label{stp:rounding}
      \State $\mtil \gets \groupagg^{\mathrm{SUM}(\Wtil)}$
      \Comment{assert: $\mtil < 2 m$}
      \State $\Rtil \gets \Rtil\ \gunion\ \{ \langle \mathit{dummy}
      \rangle \}.(W \gets 0).(\Wtil \gets 2m - \mtil)$      \label{stp:dummy}
      \State $\Dcal_{TM} \gets \groupagg_{\Wtil}^{\mathrm{COUNT(*)}}
      (\Rtil)$      \Comment{rounded weight distr} \label{stp:wdist}
      \State $\Rtil \gets \Rtil.(\mathit{Id} \gets \idfunc)$
      \Comment{Attach ids}
      \State $\langle \Rtil_{bph} \rangle \gets
      $\textsc{ReorderBarelyPrefixHeavy}$(\Rtil, \Dtil_{TM})$\label{stp:bph}
      \State $\Rtil_{\mathit{exp}} \gets
      $\textsc{ExpandPrefixHeavy}$(\langle \Rtil_{bph} \rangle)$\label{stp:exp-bph}
      \State $\Rtil_{\mathit{exp}} \gets
      \Rtil_{\mathit{exp}}.(\mathit{Rank} \gets
      \idfunc_{\mathit{Id}})$ \label{stp:filter1}
      \State Output $\project_{\attrs(R)} (\select_{\mathit{Rank} <=
        W} (\Rtil_{\mathit{exp}}))$ \label{stp:filter2}
    \EndProcedure
  \end{algorithmic}
\end{algorithm}
Algorithm~\ref{alg:obl-exp} presents our oblivious expansion
algorithm. Step~\ref{stp:rounding} performs weight rounding. Directly
expanding with these rounded weights produces a sequence of length
$\mtil$; the resulting algorithm would not be oblivious since $\mtil$
does not depend on $n$ and $m$ (output size) alone. We therefore add
(Step~\ref{stp:dummy}) a dummy tuple with rounded weight $2m - \mtil$
(and actual weight $0$). Expanding this table produces $2m$
tuples. Step~\ref{stp:wdist} computes the distribution of rounded
weights. We note that this step consumes $\Rtil$ and produces no
output since $\Dcal_{TM}$ remains within TM. Step~\ref{stp:bph}
reorders the table (sequence) to make it barely prefix heavy which is
expanded using Algorithm~\ref{alg:obl-exp-ph} in
Step~\ref{stp:exp-bph}. Steps \ref{stp:filter1} and \ref{stp:filter2}
filter out dummy tuples produced due to rounding using an oblivious
selection algorithm.
\begin{theorem}
  Algorithm \ref{alg:obl-exp} obliviously expands an input table in
  time $O ((n + m) \log (n + m))$ using $O (\log (n + m))$ TM memory,
  where $n$ and $m$ denote the input and output sizes, respectively.
\end{theorem}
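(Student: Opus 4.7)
(Proposal.) The plan is to prove the three claims (obliviousness, time, TM memory) by a step-by-step analysis of Algorithm~\ref{alg:obl-exp}, appealing to the obliviousness guarantees already established for each primitive it invokes and to Lemma~\ref{lem:space-order} for the sub-call to Algorithm~\ref{alg:obl-exp-ph}. The central invariant I would track is that every UM-visible quantity processed by a step depends only on $(n,m)$, so that the trace of that step is already oblivious in the sense of Theorem~\ref{thm:ob-sja} and its analogues.

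First I would argue obliviousness. The aggregations in Steps~2 and~4 produce scalar outputs whose UM trace depends only on $n$; the weight-rounding in Step~\ref{stp:rounding} is a relation augmentation of a size-$n$ table; the generalized union in Step~\ref{stp:dummy} adds exactly one tuple, deterministically in $(n,m)$ once we fix the rounded total $\mtil$ (and the result has exactly $n+1$ tuples). The subtle step is Step~\ref{stp:wdist}: the group-by over $\Wtil$ has at most $\lceil\log_2(2m)\rceil+1$ distinct keys because the rounded weights are powers of two bounded by $2m$; fixing the output size to this $O(\log m)$ upper bound (padding with dummy entries inside TM) makes the UM trace oblivious and simultaneously bounds $|\Dcal_{TM}|$. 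Step~\ref{stp:bph} is handled by invoking the obliviousness of ReorderBarelyPrefixHeavy from the full version. Step~\ref{stp:exp-bph} applies Algorithm~\ref{alg:obl-exp-ph}, whose UM access sequence depends only on its input length $n+1$ and its output length, fixed at $2m$ by the dummy-tuple construction. Steps~\ref{stp:filter1}--\ref{stp:filter2} are a ranked identity assignment followed by an oblivious filter whose output has exactly $m$ tuples (the sum of the true weights), so its trace is determined by $(n,m)$. Composing oblivious steps yields an oblivious algorithm.

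Next I would bound time and TM memory. All steps are either relation augmentations (linear scans), generalized unions, group-bys, oblivious sorts, or oblivious filters, each on inputs of size $O(n+m)$; each runs in $O((n+m)\log(n+m))$ by the standard primitive bounds. For the central sub-call, by Lemma~\ref{lem:space-order} the reordered sequence $\Rtil_{bph}$ is barely prefix-heavy, so Algorithm~\ref{alg:obl-exp-ph} uses only $O(1)$ counters while producing $2m$ outputs in $O(n+m)$ work. The TM footprint is dominated by $\Dcal_{TM}$, which stores $O(\log m)$ key--count pairs (each $O(\log(n+m))$ bits), giving $O(\log(n+m))$ total TM words. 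All other primitives use $O(1)$ (or $O(\log(n+m))$ for the oblivious sort of~\cite{DBLP:journals/jacm/Goodrich11}) TM memory.

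The main obstacle I anticipate is the obliviousness of Step~\ref{stp:wdist}: a vanilla group-by reveals the number of distinct rounded weights, which is data-dependent. I would address this by observing that the set of possible rounded weights is fixed to $\{2^0, 2^1, \ldots, 2^{\lceil\log_2 2m\rceil}\}$ once $m$ is known, so the aggregation is realized as a scan that always writes exactly $\lceil\log_2 2m\rceil+1$ TM entries (zero counts for absent powers), making the UM trace a function of $(n,m)$ alone. A secondary concern is ensuring the dummy tuple of weight $0$ in Step~\ref{stp:dummy} survives Steps~\ref{stp:bph}--\ref{stp:exp-bph} without breaking the prefix-heavy property; this follows because the dummy's rounded weight $2m-\mtil < 2m$ is still a valid weight under the $\leq 2m$ bound, and Lemma~\ref{lem:space-order} applies to the full reordered sequence.
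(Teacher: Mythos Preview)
Your proposal is correct and follows the same step-by-step local-obliviousness argument as the paper's own (very brief) sketch: show that every UM-visible input/output size at each step of Algorithm~\ref{alg:obl-exp} is a function of $(n,m)$ alone, then read off the time and TM-memory bounds from the primitive guarantees. In fact your write-up is more careful than the paper's sketch---you correctly note the $2m$ intermediate size after Step~\ref{stp:exp-bph} (the paper's list ``$1,n,n+1,m$'' omits this) and you spell out why $\Dcal_{TM}$ fits in $O(\log m)$ words; the only small slip is that the dummy's $\Wtil$ value $2m-\mtil$ need not itself be a power of two, so your enumeration of possible rounded weights should allow for one extra key, which does not affect the bound.
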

\begin{proof} (Sketch) The input size of each step is one of $n$,
  $n+1$ or $m$. The output size of each step is one of $1$, $n$,
  $n+1$, or $m$. Each step is locally oblivious, so all data access
  patterns are fixed once we fix $n$ and $m$.
\end{proof}


\section{Query Processing Algorithms}
\label{sec:algorithm}
We now present oblivious query processing algorithms for SPJ and GSPJ
queries. Recall from Section~\ref{sec:dbprelim} that these are of the
form $\project_\Acal (\select_P (R_1 \join \cdots \join R_q))$ and
$\groupagg_\Gcal^{F(X)} (\select_P (R_1 \join \cdots \join
R_q))$. Instead of presenting a single algorithm, we present
algorithms for various special cases that together formalize (and
prove) the informal characterization in Theorem~\ref{thm:informal}. We
begin by presenting in Section~\ref{sec:bin-join} an oblivious
algorithm for binary join. In Section~\ref{sec:nary-join-short} we
discuss extensions to multiway joins. Section~\ref{sec:groupagg}
discusses grouping and aggregation, Section~\ref{sec:filter-in-qry}
discusses selection predicates, and Section~\ref{sec:fkey} discusses
how key-foreign key constraints can be exploited.

\subsection{Binary Join}
\label{sec:bin-join}
Recall the discussion from Section~\ref{sec:overview}
(Figure~\ref{fig:binjoin}) which presents the high level intuition
behind our binary join algorithm: Informally, to compute $R (A,
\ldots) \join S (A, \ldots)$ we begin by computing the degree of each
tuple of $R$ and $S$ in the join graph; we use semi-join aggregation
to compute the degree. We then expand $R$ and $S$ to $R_{exp}$ and
$S_{exp}$ by duplicating each tuple of $R$ and $S$ as many times as
its degree. By construction, $R_{exp}$ contains the $R$-half of join
tuples and $S_{exp}$ contains the $S$-half, and we stitch them
together to produce the final join output (see
Figure~\ref{fig:binjoin}(b)).

One remaining detail is to order $R_{exp}$ and $S_{exp}$ so that they
can be stitched to get the join result. Simply ordering by the join
column values does not necessarily produce the correct result. We
attach a subscript to join values on the $S$ side so that different
occurrences of the same value get a different subscript; the three $a$
values now become $a_1, a_2, a_3$. We expand $S$ as before remembering
the subscripts, so there are two copies each of $a_1$, $a_2$, and
$a_3$. We expand $R$ slightly differently: each $a$ tuple on $R$ is
expanded $3$ times and we produce one copy of each subscript. For
example, tuple $r_1$ is expanded to $(r_1, a_1)$, $(r_1, a_2)$ and
$(r_1, a_3)$. Sorting by the subscripted values and stitching produces
the correct join result.

\begin{algorithm}[t]
  \small
  \caption{Binary Natural Join: $R \bowtie S$}
  \label{alg:bin-join}
  \begin{algorithmic}[1]
    \Procedure{BinaryJoin}{$R, S$}
      \State $\Jcal \gets \attrs(R) \isect \attrs(S)$ \Comment{join attrs}
      \State $\Rtil \gets R.(N \gets 1)$ \Comment{tuple multiplicity}
      \State $\Rtil \gets \Rtil.(\mathit{Id} \gets \idfunc)$
      \Comment{Add an id column}
      \State $\Stil \gets S.(N \gets 1)$ \Comment{tuple multiplicity}
      \State $\Stil \gets \Stil.(\mathit{Id} \gets \idfunc)$
      \Comment{Add an id column}
      \State $\Rtil \gets \Rtil.(N_S \stackrel{\ltimes}{\gets}
      \mathit{Sum}(S.N))$\label{stp:rjoindegree} \Comment{Compute degree}
      \State $\Stil \gets \Stil.(N_R \stackrel{\ltimes}{\gets}
      \mathit{Sum}(R.N))$\label{stp:sjoindegree} \Comment{Compute degree}
      \State $\Stil \gets \Stil.(\mathit{JId} \gets \idfunc_\Jcal)$ \label{stp:jid}
      \State $R_{exp} \gets \expand_{N_S} (\Rtil)$ \label{stp:rexp}
      \State $R_{exp} \gets R_{exp}.(\mathit{JId} \gets
      \idfunc_{\mathit{Id}})$ \label{stp:rjid}
      \State $S_{exp} \gets \expand_{N_R} (\Stil)$ \label{stp:sexp}
      \State Output $\project_{\attrs(R) \union \attrs(S)} (
      {\langle R_{exp} \rangle}_{\Jcal, \mathit{JId}}.{\langle S_{exp}
        \rangle}_{\Jcal, \mathit{JId}})$  \label{stp:binjoinstitch}    
    \EndProcedure
  \end{algorithmic}
\end{algorithm}

\vspace{1ex}\noindent{\textbf{Formal Algorithm:}
  Algorithm~\ref{alg:bin-join} presents our join algorithm. Steps
  \ref{stp:rjoindegree} and \ref{stp:sjoindegree} compute the join
  degrees ($N_S$ and $N_R$, resp) for each $R$ and $S$ tuple using a
  semi-join aggregation. Step~\ref{stp:jid} is a grouping identity
  operation. All $S$ tuples agreeing on join columns $\Jcal$ belong to
  the same group, and each gets a different identifier. This step
  plays the role of assigning subscripts to $S$ tuples in
  Figure~\ref{fig:binjoin}(a). Steps \ref{stp:rexp} and \ref{stp:sexp}
  expand $\Rtil$ and $\Stil$ based on the join
  degrees. Step~\ref{stp:rjid} is another grouping identity
  operation. All tuples in $R_{exp}$ that originated from the same $R$
  tuple belong to the same group, and each gets a different
  identifier. This has the effect of expanding $R$ tuples with a
  different subscript. Step~\ref{stp:binjoinstitch} stitches
  expansions of $R$ and $S$ to get the final join output.
  Figure~\ref{fig:binjoinex} illustrates Algorithm~\ref{alg:bin-join}
  for the example shown in Figure~\ref{fig:binjoin}. Note the
  correspondence between $\mathit{Jid}$ column values and subscripts
  in Figure~\ref{fig:binjoin}.
\begin{figure}[t]
\small
\centering
\begin{tabular}{cc}

\begin{tabular}{|c|c|c|c|} \hline
$\mathit{Id}$ & $A$ & $N$ & $N_S$ \\ \hline
$1$ & $a$ & $1$ & $3$ \\ \hline
$2$ & $b$ & $1$ & $1$ \\ \hline 
$3$ & $a$ & $1$ & $3$ \\ \hline
\end{tabular}

& 

\begin{tabular}{|c|c|c|c|c|} \hline
$\mathit{Id}$ & $A$ & $N$ & $N_R$ & $\mathit{JId}$ \\ \hline
$1$ & $a$ & $1$ & $2$ & $1$ \\ \hline
$2$ & $b$ & $1$ & $1$ & $1$ \\ \hline
$3$ & $a$ & $1$ & $2$ & $2$ \\ \hline
$4$ & $a$ & $1$ & $2$ & $3$ \\ \hline
\end{tabular} \\ 

(a): $\Rtil$ & (b): $\Stil$ \\

\begin{tabular}{|c|c|c|} \hline
$\mathit{Id}$ & $A$ & $\mathit{Jid}$ \\ \hline
$1$ & $a$ & $1$  \\ \hline
$3$ & $a$ & $1$  \\ \hline
$1$ & $a$ & $2$  \\ \hline
$3$ & $a$ & $2$ \\ \hline
$1$ & $a$ & $3$ \\ \hline
$3$ & $a$ & $3$ \\ \hline
$2$ & $b$ & $1$ \\ \hline
\end{tabular}

& 

\begin{tabular}{|c|c|c|} \hline
$\mathit{Id}$ & $A$ & $\mathit{Jid}$ \\ \hline
$1$ & $a$ & $1$ \\ \hline
$1$ & $a$ & $1$ \\ \hline
$3$ & $a$ & $2$ \\ \hline
$3$ & $a$ & $2$ \\ \hline
$4$ & $a$ & $3$ \\ \hline
$4$ & $a$ & $3$ \\ \hline
$2$ & $b$ & $1$ \\ \hline
\end{tabular}

\\

(c): ${\langle R_e \rangle}_{A, \mathit{Jid}}$ & (d): ${\langle S_e
  \rangle}_{A, \mathit{Jid}}$
\end{tabular}
\caption{Intermediate tables used by Algorithm~\ref{alg:bin-join} for
  Example of Figure~\ref{fig:binjoin}. Only relevant columns of
  $R_{exp}$ and $S_{exp}$ are shown.}
\label{fig:binjoinex}
\end{figure}

\begin{theorem}
\label{thm:obl-binjoin}
Algorithm~\ref{alg:bin-join} obliviously computes the binary natural
join of two tables $R$ and $S$ in time\linebreak $\Theta (n_R \log n_R + n_S
\log n_S + m \log m)$, where $n_R = |R|$, $n_S = |S|$, and $m = |R
\join S|$ using $O (\log (n_R + n_S))$ TM memory.
\end{theorem}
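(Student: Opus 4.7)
The plan is to establish three separate claims: (i) correctness of the output, (ii) obliviousness of the memory access pattern, and (iii) the stated time and space bounds. For correctness, the key invariant to verify is that after Steps~\ref{stp:rjoindegree}--\ref{stp:jid}, each $R$-tuple $r$ carries $N_S(r) = |\{s \in S : s[\Jcal] = r[\Jcal]\}|$ and each $S$-tuple $s$ carries $N_R(s) = |\{r \in R : r[\Jcal] = s[\Jcal]\}|$ together with a within-join-group rank $\mathit{JId}(s) \in \{1,\dots,N_S(\text{group})\}$. After Step~\ref{stp:rexp} every $R$-tuple appears $N_S(r)$ times in $R_{exp}$, and after Step~\ref{stp:rjid} the copies of a fixed $r$ receive distinct $\mathit{JId}$ values $1,\dots,N_S(r)$; after Step~\ref{stp:sexp} every $S$-tuple appears $N_R(s)$ times in $S_{exp}$, all carrying the same $\mathit{JId}(s)$. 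I would then argue that sorting both sides by $(\Jcal, \mathit{JId})$ lines up, for each join-attribute value $a$ and each $j \in \{1,\dots,N_S(a)\}$, precisely $N_R(a)$ $R$-copies with $\mathit{JId}=j$ against $N_R(a)$ $S$-copies of the unique $S$-tuple in group $a$ having $\mathit{JId}=j$; stitching thus enumerates each $(r,s)$ pair in $R \join S$ exactly once.

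For obliviousness, I would walk through Algorithm~\ref{alg:bin-join} and check that the size of the input and output of every line is a fixed function of $(n_R, n_S, m)$: the augmentations, semi-join aggregations, and grouping-identity steps all preserve sizes $n_R$ or $n_S$; each expansion produces exactly $m = \sum_r N_S(r) = \sum_s N_R(s)$ tuples; and the final stitch plus projection has output size $m$. Since each primitive is itself oblivious in its input and output sizes (Section~\ref{sec:prim} and Theorem~\ref{thm:ob-sja}), composing them with sizes determined by $(n_R, n_S, m)$ yields an overall distribution of UM access sequences that depends only on $(n_R, n_S, m)$, matching the definition of obliviousness.

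For the running time, I would sum the costs of the primitives. The augmentations are linear. The two semi-join aggregations cost $O((n_R+n_S)\log(n_R+n_S))$. The two expansions cost $O((n_R + m)\log(n_R+m))$ and $O((n_S+m)\log(n_S+m))$ respectively; expanding $(n_R+m)\log(n_R+m)$ via $\log(n_R+m) \le 1 + \max(\log n_R, \log m)$ and splitting into cases $m \le n_R$ versus $m > n_R$ shows this is $O(n_R \log n_R + m \log m)$, and symmetrically for $S$. The final grouping-identity and sort-then-stitch steps cost $O(m \log m)$. Summing gives $\Theta(n_R \log n_R + n_S \log n_S + m \log m)$, and the lower bound is immediate since the algorithm must write $m$ output tuples.

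For the TM memory bound, every primitive invoked uses $O(\log k)$ TM memory on an instance of total size $k$, and the largest $k$ encountered is $O(n_R + n_S + m)$. Using the crude bound $m \le n_R n_S$, we get $\log m = O(\log n_R + \log n_S)$, so $O(\log(n_R+n_S+m)) = O(\log(n_R+n_S))$, matching the claim. I expect the main obstacle to be the correctness argument: specifically, carefully articulating why ordering both sides by $(\Jcal, \mathit{JId})$, with $\mathit{JId}$ assigned by $\idfunc_\Jcal$ on $\Stil$ and by $\idfunc_{\mathit{Id}}$ on $R_{exp}$, forces the stitched pairs to coincide exactly with $R \join S$ without double-counting or omissions.
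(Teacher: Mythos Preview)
Your proposal is correct and follows exactly the pattern the paper uses for its other proof sketches (e.g., Theorem~\ref{thm:ob-sja}): verify that every step's input and output size is a fixed function of $(n_R,n_S,m)$, invoke local obliviousness of each primitive, and sum the per-step costs. The paper in fact gives no explicit proof for Theorem~\ref{thm:obl-binjoin}, so your write-up is more detailed than anything in the text; in particular, your correctness argument (matching $\mathit{JId}$ values assigned by $\idfunc_\Jcal$ on the $S$ side against those assigned by $\idfunc_{\mathit{Id}}$ on the expanded $R$ side) spells out what the paper only illustrates by example.

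One small quibble: your justification for the $\Theta$ lower bound (``the algorithm must write $m$ output tuples'') only yields $\Omega(m)$. Since the claim is about \emph{this} algorithm's running time, the matching lower bound comes instead from the fact that the semi-join aggregations and the final sort-then-stitch each invoke oblivious sorts of sizes $\Theta(n_R+n_S)$ and $\Theta(m)$, which themselves cost $\Theta((n_R+n_S)\log(n_R+n_S))$ and $\Theta(m\log m)$.
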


\subsection{Multiway Join}
\label{sec:nary-join-short}
We now consider multiway joins, i.e., natural joins between $q$
relations $R_1 \join \cdots \join R_q$. When the multiway join has a
property called \emph{acyclicity} there exists an efficient oblivious
algorithm for evaluating the join. 

The algorithm for evaluating a multiway join is a generalization of
the algorithm for binary join. Informally, we compute the contribution
of each tuple in $R_1, \ldots, R_q$ towards the final join. The
contribution generalizes the notion of a join-graph degree in the
binary join case, and this quantity can be computed by performing a
sequence of semi-join aggregations between the input relations. We
expand the input relations $R_1, \ldots, R_q$ to $R_{1,exp}, \ldots,
R_{q,exp}$ respectively by duplicating each tuple as many times as its
contribution, and stitch the expanded tables to produce the final join
output. The details of ordering the expansions $R_{1,exp}, \ldots,
R_{q,exp}$ are now more involved. A formal description of the overall
algorithm is deferred to the full-version~\cite{full-version}. Here we
present a formal characterization of the class of multiway join
queries our algorithm is able to handle.

\begin{definition}
\label{def:acyclic}
  The multiway join query $R_1 \join \cdots \join R_q$ is called
  \emph{acyclic}, if we can arrange the relations $R_1, \ldots, R_q$
  as nodes in a tree $T$ such that for all $i, j, k \in [1,q]$ such
  that $R_k$ is along the path from $R_i$ to $R_j$ in $T$,
  $\attrs(R_i) \isect \attrs(R_j) \subseteq \attrs (R_k)$.
\end{definition}

\begin{theorem}
\label{thm:obl-njoin-short}
There exists an oblivious algorithm to compute the natural join query
$(R_1 \join \cdots \join R_q)$ provided the query is acyclic. Further,
the time complexity of the algorithm is $\Theta(n \log n + m\log m)$
where $n = \sum_i |R_i|$ is the input size and $m = |R_1 \join \cdots
\join R_q|$ denotes the output size, and the TM memory requirement is
$O (\log (n+ m))$.
\end{theorem}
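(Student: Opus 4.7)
The plan is to generalize Algorithm~\ref{alg:bin-join} by exploiting the tree structure $T$ guaranteed by Definition~\ref{def:acyclic}. Fix any rooting of $T$, say at $R_r$. As in the binary case, I would (i) compute for each tuple $t$ the number $c(t)$ of output tuples containing $t$, (ii) expand each $R_i$ by $c(t)$ to obtain $R_{i,exp}$ of size exactly $m$, and (iii) stitch the expansions together in a compatible order.

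For (i), I would perform two passes of semi-join aggregations. In the bottom-up pass, for each tuple $t \in R_i$ compute $\mathrm{down}(t)$, the number of tuples in the join of the subtree rooted at $R_i$ whose projection on $\attrs(R_i)$ equals $t$. For leaves, $\mathrm{down}(t)=1$; for an internal node with children $R_{c_1},\ldots,R_{c_k}$, use one semi-join aggregation per child to sum $\mathrm{down}$ values over joining child tuples, then combine the $k$ resulting columns by multiplication (a local operation implementable via relation augmentation). The running-intersection property in Definition~\ref{def:acyclic} guarantees that the join columns between $R_i$ and each $R_{c_j}$ are exactly $\attrs(R_i)\isect\attrs(R_{c_j})$, which is precisely what semi-join aggregation needs. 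A symmetric top-down pass computes $\mathrm{up}(t)$, and then $c(t) = \mathrm{down}(t)\cdot\mathrm{up}(t)$. By construction $\sum_{t \in R_i} c(t) = m$ for every $i$, so Algorithm~\ref{alg:obl-exp} applied with weight column $c$ yields $R_{i,exp}$ of size exactly $m$ for each $i$, covering step (ii).

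Step (iii) is where I expect the main obstacle to lie: defining a canonical ordering on the $R_{i,exp}$ so that stitching reproduces the join output. My plan is to process nodes of $T$ in root-to-leaf order and maintain the invariant that after processing a subtree $T'$ the stitched partial table $J_{T'}$ equals the natural join of relations in $T'$, with each partial tuple duplicated to match the number of completions in the full join; in particular $|J_{T'}| = m$. To attach a new node $R_i$ whose parent $R_p$ has already been processed, I would generalize the $\mathit{JId}$ trick from Algorithm~\ref{alg:bin-join}: within each group in $J_{T'}$ agreeing on $\attrs(R_i)\isect\attrs(R_p)$, assign a local identifier using $\idfunc$, and within each matching group in $R_{i,exp}$ assign matching identifiers (using the contribution decomposition already computed in step (i)). Sorting both sides by the join attributes together with this identifier and invoking the sequence-stitching operator from Section~\ref{sec:prim} produces the updated $J_{T'}$. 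The running-intersection property is again essential: it guarantees that whenever we extend along an edge $(R_p, R_i)$, the attributes $\attrs(R_i)\isect\attrs(R_p)$ separate $R_i$ from everything already in $T'$, so the stitch is well-defined and yields no spurious join tuples.

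For complexity, there are $O(q)$ semi-join aggregations, expansions, and stitch/sort steps, where $q$ is a query constant. Each has input and output sizes bounded by $O(n)$ or $O(m)$; by Theorem~\ref{thm:ob-sja} and the expansion bound, each runs in $O((n+m)\log(n+m))$ time using $O(\log(n+m))$ TM memory. Summing yields the claimed $\Theta(n\log n + m\log m)$ bound. Obliviousness is preserved because every primitive invoked is oblivious in its input/output sizes, and by construction these sizes depend only on $n$ and $m$. The hardest part of the proof is making the stitching invariant precise and verifying inductively that it is preserved by each attachment over an arbitrary acyclic join tree; once that is done, the time, memory, and obliviousness claims fall out of the per-primitive bounds established earlier.
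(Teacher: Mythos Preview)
Your proposal is correct and follows essentially the same approach the paper sketches: compute per-tuple contributions via semi-join aggregations along the join tree (the paper exhibits the bottom-up half of this in Algorithm~\ref{alg:groupagg}), expand each $R_i$ to size $m$, and stitch the expansions using a generalization of the $\mathit{JId}$ mechanism. The paper defers the full details of the multiway stitching to its full version, but your root-to-leaf attachment invariant and your appeal to the running-intersection property to justify that $\attrs(R_i)\cap\attrs(R_{p(i)})$ separates $R_i$ from the already-processed subtree are exactly the right ingredients, and your complexity and obliviousness accounting match the paper's.
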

The concept of acyclicity is well-known in database
theory~\cite{DBLP:conf/vldb/Yannakakis81}; in fact, it represents the
class of multiway join queries for which algorithms polynomial in
input and output size are known. We use the acyclicity property to
compute the contribution of each tuple towards to the final output
using a series of semi-join aggregations. Without the acyclicity
property we do not know of a way of computing this quantity short of
evaluating the full join.

\subsection{Grouping and Aggregation}
\label{sec:groupagg}
This section presents an oblivious algorithm for grouping aggregation
over acyclic joins. We present our algorithm for the case of
$\mathrm{SUM}$; it can be easily adapted for the other standard
aggregation functions: $\mathrm{MIN}$, $\mathrm{MAX}$, $\mathrm{AVG}$,
and $\mathrm{COUNT}$. The algorithm handles only a limited form of
grouping where all the grouping attributes belong to a single
relation. The query is therefore of the form
$\groupagg_{\Gcal}^{\mathrm{SUM}(R_a.X)}$ $(R_1 \bowtie \cdots \bowtie
R_q)$, where (wlog) $\Gcal \in \attrs (R_1)$. We believe the case
where the grouping attributes come from multiple relations is hard as
we discuss in Section~\ref{sec:hardness}.

\vspace{1ex}\noindent\textbf{Notation:} Since the join $R_1 \join
\cdots \join R_q$ is acyclic we can arrange the relations as nodes in
a tree $T$ as per Definition~\ref{def:acyclic}. An algorithm for
constructing such a tree is presented in~\cite{conf/csac/YuO79}. For
the remainder of this section fix some tree $T$. Wlog, we assume that
relations $R_1, \ldots, R_q$ are numbered by a pre-order traversal of
tree $T$ so that if $R_i$ is an ancestor of $R_j$ then $i < j$. For
any relation $R_i$, we use $\#c(i)$ to denote the number of children
and $R_{c(i,1)}, \ldots, R_{c(i,\#c(i))}$ to denote the children of
$R_i$ in $T$; we denote the parent of $R_i$ using $R_{p(i)}$. We use
$\mathrm{Desc}(R_i)$ and $\mathrm{Anc}(R_i)$ to denote the descendants
and ancestors of $R_i$ in $T$; both $\mathrm{Anc}(R_i)$ and
$\mathrm{Desc}(R_i)$ contain $R_i$. For any set $\Rcal$ of relations
$\join \Rcal$ denotes the natural join of elements of $\Rcal$; e.g.,
$\join \mathrm{Desc}(R_i)$ denotes the join of $R_i$ and it
descendants.

Algorithm~\ref{alg:groupagg} presents our grouping aggregation
algorithm. The algorithm operates in $2$ stages: (1) a bottom-up
counting stage and (2) a grouping stage which works over just $R_1$.

\vspace{1ex}\noindent\textbf{Bottom-up Counting:} In this stage, we
add an attribute $N$ to each tuple. For a tuple $t \in R_i$, $t[N]$
denotes the number of join tuples $t$ is part of in $\join
\mathrm{Desc}(R_i)$. For leaf relations $R_i$, $t[N] = 1$ for all
tuples $t \in R_i$. For non-leaf relations, a simple recursion can be
used to compute the value of attribute $N$. Consider $t \in R_i$ for
some non-leaf $R_i$ and let $t[N_{c(i,j)}]$ denote the number of join
tuples $t$ is part of in the join $\join (\{R_i\} \union
\mathrm{Desc}(R_{c(i,j)})$ (join of all descendants rooted in child
$R_{c(i,j)}$). Then we can show using the acyclic property of the
join, $t[N] = \Pi_j t[N_{c(i,j)}]$ (Step~\ref{stp:ncomputegb}). In
addition, for all relations in $\mathrm{Anc}(R_a)$ ($R_a$ is the
relation containing aggregated column $X$), we add a partial
aggregation attribute $S_X$. For a tuple $t \in R_i \in
\mathrm{Anc}(R_a)$, $t[S_X]$ represents the sum of $R_a.X$ values in
$\join \mathrm{Desc}(R_i)$ considering only tuples that $t$ is part
of.

\vspace{1ex}\noindent\textbf{Grouping:} This stage essentially
performs the grouping $\groupagg_{\Gcal}^{\mathrm{SUM}(S_X)}
(R_1)$. We attach a unique id $\mathit{Id}_\Gcal$ to each tuple within
a group defined by $\Gcal$ (Step~\ref{stp:gb-gpid}). We then compute
the running sum of $S_X$ within each group defined by $\Gcal$; we
compute the running sum in descending order of $\mathit{Id}_\Gcal$ so
that the total sum for a group is stored with the record with
$\mathit{Id}_\Gcal = 1$. We get the final output by (obliviously)
selecting the records with $\mathit{Id}_\Gcal$ and performing suitable
projections (Step \ref{stp:gb-out}).

\begin{algorithm}[t]
  \small
  \caption{Grouping and Aggregation:
    $\groupagg_{\Gcal}^{\mathrm{SUM}(R_a.X)} (R_1 \bowtie \cdots
    \bowtie R_q)$}
  \label{alg:groupagg}
  \begin{algorithmic}[1]
    \Procedure{GroupingAggr}{$(R_1, \ldots, R_q), \Gcal, R_a.X$}
      \ForIter{$i=q$}{$1$} 
        \State $\Rtil_i \gets R_i$
        \If{$\#c(i) = 0$} 
          $\Rtil_i \gets \Rtil_i.(N \gets 1)$ \Comment{leaf table}
        \Else
          \ForIter{$j = 1$}{$\#c(i)$}
            \State $\Rtil_i \gets \Rtil_i.(N_{c(i,j)} \stackrel{\ltimes}{\gets}
            \mathit{Sum}(\Rtil_{c(i,j)}.N))$ 
          \EndForIter
          \State $\Rtil_i \gets \Rtil_i.(N \gets \Pi_{j=1}^{\#c(i)}
          N_{c(i,j)})$ \label{stp:ncomputegb}
        \EndIf
        \If{$R_i = R_a$} 
          \State $\Rtil_a \gets \Rtil_a.(S_X \gets X \times N)$ 
        \ElsIf{$R_a \in \mathrm{Desc}(R_i)$}
          \State $\ell \gets $ unique $\ell$ such that $R_a \in
          \mathrm{Desc}(R_{c(i,\ell)})$ 
          \State $\Rtil_i \gets \Rtil_i.(S_X \stackrel{\ltimes}{\gets} \mathit{Sum}(R_{c(i,\ell)}.S_X))$
        \EndIf
      \EndForIter
      \State $\Rtil_1 \gets \Rtil_1.(\mathit{Id}_\Gcal \gets
      \idfunc_{\Gcal})$ \label{stp:gb-gpid}
      \State $\Rtil_1 \gets \Rtil_1.(\mathit{RS}_X \gets
      \rsfunc_{\Gcal}^{-\mathit{Id}_\Gcal})$ \label{stp:gb-rs}
      \State Output $\project_{\Gcal, \mathit{RS}_X}
      (\select_{\mathit{Id}_\Gcal = 1} (\Rtil_1))$ \label{stp:gb-out}
    \EndProcedure
  \end{algorithmic}
\end{algorithm}

\begin{theorem}
\label{thm:groupagg}
  Algorithm~\ref{alg:groupagg} obliviously computes the grouping
  aggregation $\groupagg_{\Gcal}^{\mathrm{SUM}(R_a.X)} (R_1 \bowtie
  \cdots \bowtie R_q)$, $\Gcal \in \attrs(R_1)$, where $(R_1 \bowtie
  \cdots \bowtie R_q)$ is acyclic, in time $\Theta (n \log n)$ and using
  $O(\log n)$ TM memory, where $n = \sum_i |R_i|$ denotes the total
  size of the input relations.
\end{theorem}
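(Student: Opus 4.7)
The plan is to prove correctness, obliviousness, and the stated complexity bounds separately.

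For correctness, I would proceed by a bottom-up induction on the tree $T$, establishing the following invariant: after processing $R_i$, for every tuple $t \in \Rtil_i$ the attribute $N$ equals the number of tuples in $\join \mathrm{Desc}(R_i)$ whose projection onto $\attrs(R_i)$ equals $t[\attrs(R_i)]$, and if $R_a \in \mathrm{Desc}(R_i)$ then $S_X$ equals $\sum_{u} u[X]$ summed over those same descendant-join tuples $u$. The base case (leaves) is immediate since $N=1$ and, at $R_a$ itself, $S_X = X \cdot N = X$. The inductive step is the heart of the argument: acyclicity of the join tree guarantees the running-intersection property, which implies that for any tuple $t \in R_i$ the set of tuples in $\join \mathrm{Desc}(R_i)$ extending $t$ decomposes as a Cartesian product across the child subtrees, with any attribute shared between two child subtrees already contained in $\attrs(R_i)$. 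This justifies both $t[N] = \prod_j t[N_{c(i,j)}]$ and the additive propagation $t[S_X] = \sum_{u \in \Rtil_{c(i,\ell)} \text{ joining } t} u[S_X] \cdot \prod_{j \neq \ell} t[N_{c(i,j)}]$; because exactly one child $\ell$ contains $R_a$ in its subtree and $t[N]$ already folds in the other child factors through the earlier $N_{c(i,j)}$ augmentation, the single semi-join aggregation suffices. Applying the invariant at the root $R_1$ shows that each $t \in \Rtil_1$ carries, in $S_X$, the exact contribution of $t$ to $\sum R_a.X$ over the full join. The grouping stage then assigns ids $\mathit{Id}_\Gcal$ within each $\Gcal$-group and runs a descending running sum of $S_X$ ordered by $-\mathit{Id}_\Gcal$, so the per-group total lands precisely on the unique tuple with $\mathit{Id}_\Gcal = 1$; selecting these tuples in Step~\ref{stp:gb-out} and projecting to $(\Gcal, \mathit{RS}_X)$ yields the grouped aggregation.

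For obliviousness, every line of Algorithm~\ref{alg:groupagg} invokes a primitive from Section~\ref{sec:prim} (relation augmentation, semi-join aggregation, grouping identity, grouping running sum, generalized union inside these, and oblivious selection). By Theorem~\ref{thm:ob-sja} and the analogous guarantees for the other primitives, each invocation is oblivious once its input and output cardinalities are fixed. I would then observe that these cardinalities are determined by the input sizes $|R_1|, \ldots, |R_q|$ and the query output size alone: augmentations preserve cardinality; the semi-join aggregation that produces $\Rtil_i.(N_{c(i,j)} \stackrel{\ltimes}{\gets} \ldots)$ reads $|\Rtil_i| + |\Rtil_{c(i,j)}|$ tuples and outputs $|\Rtil_i|$ tuples; the grouping identity and running sum on $\Rtil_1$ operate on $|R_1|$ tuples; and only the final oblivious selection has an output size equal to the number of distinct $\Gcal$-groups, which equals the query output size and is therefore fixed. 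Composing locally oblivious primitives whose input/output sizes are determined by the fixed parameters yields a globally oblivious algorithm.

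For the complexity claim, each primitive runs in $O(N \log N)$ time on input plus output size $N$ using $O(\log N)$ TM memory. The bottom-up phase performs, for each non-root $R_i$, a single semi-join aggregation against its parent of cost $O((|R_i| + |R_{p(i)}|) \log n)$, together with $O(1)$ augmentations of cost $O(|R_i| \log |R_i|)$. Summing telescopes to $O(n \log n)$ since each relation participates in $O(1)$ such aggregations. The grouping phase on $\Rtil_1$ adds $O(|R_1| \log |R_1|) = O(n \log n)$. The TM memory required is the maximum over primitive invocations, which is $O(\log n)$.

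I expect the main obstacle to be the inductive step of the correctness proof, specifically translating the structural acyclicity condition of Definition~\ref{def:acyclic} into the clean Cartesian-product decomposition that licenses both the multiplicative $N$-recurrence and the single-child $S_X$-recurrence. Once that decomposition is established, the rest of the proof is essentially bookkeeping over which primitive fires at each step.
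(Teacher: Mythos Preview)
The paper does not give a proof of Theorem~\ref{thm:groupagg}; it only states the intended invariants for $N$ and $S_X$ in the prose preceding the theorem. Your overall plan---bottom-up induction on the join tree for correctness, composition of locally oblivious primitives whose input/output sizes are fixed by $(|R_1|,\ldots,|R_q|)$ and the output cardinality for obliviousness, and summation of $O(N\log N)$ primitive costs for the time bound---matches that informal discussion and is the natural approach.

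There is, however, a real gap in your correctness argument for $S_X$. You correctly write down the recurrence that the invariant requires,
\[
t[S_X]\;=\;\Bigl(\sum_{u\in\Rtil_{c(i,\ell)}\ \text{joining}\ t} u[S_X]\Bigr)\cdot\prod_{j\neq\ell} t[N_{c(i,j)}],
\]
but then assert that ``$t[N]$ already folds in the other child factors \ldots\ the single semi-join aggregation suffices.'' That is not what Algorithm~\ref{alg:groupagg} does: the line $\Rtil_i \gets \Rtil_i.(S_X \stackrel{\ltimes}{\gets} \mathit{Sum}(R_{c(i,\ell)}.S_X))$ sets $t[S_X]$ to the bare semi-join sum $\sum_u u[S_X]$, with no multiplication by sibling counts, and neither $t[N]$ nor any $N_{c(i,j)}$ is consulted in the $S_X$ computation above $R_a$ or anywhere in the grouping stage. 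A three-table instance exposes the issue: let $R_1(A,G)=\{(a,g)\}$ be the root with children $R_2(A,X)=\{(a,5)\}$ (so $R_a=R_2$) and $R_3(A)=\{(a),(a)\}$. The full join has two tuples and the correct $\mathrm{SUM}(X)$ for group $g$ is $10$, but the algorithm as written yields $t[S_X]=5$ at $R_1$ and outputs $5$. So either Algorithm~\ref{alg:groupagg} is missing an augmentation $S_X \gets S_X \cdot \prod_{j\neq\ell} N_{c(i,j)}$ after the semi-join (in which case you are really proving a corrected algorithm and should say so), or you must adopt a different invariant for $S_X$ and explain where the sibling factors are recovered; your current sentence does neither, and the inductive step for $S_X$ does not go through.
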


\subsection{Selections}
\label{sec:filter-in-qry}

This section discusses how selections in SPJ and GSPJ queries can be
handled. We assume a selection predicate $P$ is a conjunction of
table-level predicates, i.e., of the form $(P_{R_{i1}} \wedge
P_{R_{i2}} \wedge \cdots)$ where each $P_{R_{ij}} : \domain{R_{i_j}}
\rightarrow \{ \mathrm{true}, \mathrm{false} \}$ is a binary predicate
over tuples of $R_{i_j}$. To handle selections in a GSPJ query
$\groupagg_\Gcal^{F(A)} (\select_P (R_1 \join \cdots \join R_q))$, we
modify the bottom-up counting stage of Algorithm~\ref{alg:groupagg} as
follows: when processing any relation $R_{i}$ with a table-level
predicate $P_{R_{i}}$ that is part of $P$, we set the value of
attribute $N$ to $0$ for all tuples $t_i \in R_{i}$ for which $P_{R_i}
(t_i) = \mathrm{false}$; for all other tuples the value of attribute
$N$ is calculated as before. We can show that the resulting algorithm
correctly and obliviously evaluates the query $\groupagg_\Gcal^{F(A)}
(\select_P (R_1 \join \cdots \join R_q))$. A similar modification
works for the SPJ query $\select_P (R_1 \join \cdots \join R_q)$ and
is described in the full version of the paper~\cite{full-version}.

\subsection{Exploiting Foreign Key Constraints}
\label{sec:fkey}
We informally discuss how we can exploit key-foreign key constraints;
We note that keys and foreign keys are part of database schema
(metadata), and we view them as public knowledge (see discussion in
Appendix~\ref{sec:query-sec}).  Consider a query $Q$ involving a
multiway join $R_1 \join \cdots \join R_q$ (with or without grouping)
and let $R_i$ (key side) and $R_j$ (foreign key side) denote two
relations involved in a key-foreign key join. We explicitly evaluate
$R_{ij} \gets R_i \join R_j$ using the oblivious binary join algorithm
and replace references to $R_i$ and $R_j$ in $Q$ with $R_{ij}$. From
key-foreign key property, it follows that $|R_{ij}| = |R_j|$, so this
step by itself does not render the query processing non-oblivious. We
treat any foreign key references to $R_j$ as references to
$R_{ij}$. We continue to this process of identifying key-foreign key
joins and evaluating them separately until no more such joins
exist. At this point, we revert to the general algorithms presented in
Sections~\ref{sec:nary-join-short}-\ref{sec:filter-in-qry} to process
the remainder of the query.

\begin{theorem} 
\label{thm:formal}
There exists an oblivious (secure) query processing algorithm that
requires $O(\log n)$ storage in TM for any database query involving
joins, grouping aggregation (as the outermost operation), and filters,
if (1) the non-foreign key join graph is acyclic and (2) all grouping
attributes are connected through foreign key joins, where $n$ denotes
the sum of query input and output sizes. Further, assuming no
auxiliary structures, the running time of the algorithm is within
$O(\log n)$ (multiplicative factor) of the running time of the best
insecure algorithm.
\end{theorem}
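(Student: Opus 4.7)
(Proposal.) The plan is to compose the three ingredients already developed in Sections~\ref{sec:bin-join}--\ref{sec:fkey} into a single algorithm whose obliviousness and $O(\log n)$ TM footprint follow step by step. Let $Q = \groupagg_\Gcal^{F(A)} (\select_P (R_1 \join \cdots \join R_q))$ (the SPJ case is identical with the outer grouping omitted). I would first apply the preprocessing of Section~\ref{sec:fkey}: repeatedly pick a key--foreign key edge $(R_i, R_j)$, invoke Algorithm~\ref{alg:bin-join} to form $R_{ij} \gets R_i \join R_j$, and rewrite $Q$ by substituting $R_{ij}$ for $R_i$ and $R_j$. Because key/foreign key constraints are schema metadata (hence public), and because $|R_{ij}| = |R_j|$ by the foreign key property, each such contraction takes $\Theta(|R_j|\log |R_j|)$ time and produces an output whose size is a deterministic function of the public input sizes; obliviousness of each contraction is inherited from Theorem~\ref{thm:obl-binjoin}.

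After all foreign-key edges have been contracted, what remains is a query $Q'$ whose join graph is precisely the ``non-foreign key join graph'' of $Q$, which by hypothesis (1) is acyclic. Moreover, hypothesis (2) says that all grouping attributes in $\Gcal$ are connected through foreign-key joins; hence after the contraction phase they all lie in a single relation of $Q'$, which I denote $R'_1$. The aggregated attribute $A$ lies in some $R'_a$ of $Q'$. At this point $Q'$ matches exactly the form handled by Algorithm~\ref{alg:groupagg}, possibly with selections, so I would invoke that algorithm, augmented with the predicate-absorption trick of Section~\ref{sec:filter-in-qry}: when processing any relation carrying a table-level conjunct of $P$, zero out the multiplicity attribute $N$ on tuples failing the predicate. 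Theorem~\ref{thm:groupagg} (and its SPJ analogue built from Theorem~\ref{thm:obl-njoin-short}) then gives correctness, obliviousness, and an $O(\log n)$ TM bound for the remainder of the computation.

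For the overall complexity claim, I would observe that the memory access trace of the composite algorithm is the concatenation of the traces of finitely many primitive invocations; each primitive is oblivious in its own input and output sizes (Theorems~\ref{thm:ob-sja}, \ref{thm:obl-binjoin}, \ref{thm:obl-njoin-short}, \ref{thm:groupagg}), and those sizes are determined by the public sizes $|R_1|,\dots,|R_q|$, the schema, and (transitively) the final output size $|Q(D)|$. Hence the full trace depends only on public quantities, yielding obliviousness (and then security via Theorem~\ref{thm:obl}). The per-step cost is $O(N_{\text{step}} \log N_{\text{step}})$ with $N_{\text{step}} \le n$, and the number of steps is $O(q)$, a constant in the query-size-independent input model; the insecure baseline also pays $\Omega(n)$ on the same plan shape, so the multiplicative overhead is $O(\log n)$.

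The main obstacle I would expect is the bookkeeping in Paragraph~2: precisely verifying that after contracting every key--foreign key edge, (a) the residual join graph is still acyclic (this needs care because contracting an edge in a hypergraph can in principle create cycles, so I would argue using Definition~\ref{def:acyclic}'s tree characterization that the contracted relation can be placed at either endpoint's position in the witnessing tree), and (b) the grouping attributes of $Q$ genuinely coalesce into a single relation of $Q'$ under hypothesis~(2). Everything else is a straightforward composition of the oblivious primitives already proved correct.
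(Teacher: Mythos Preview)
Your proposal is correct and follows essentially the same route as the paper: contract key--foreign key joins first (Section~\ref{sec:fkey}), then invoke the acyclic-join/grouping machinery (Theorems~\ref{thm:obl-njoin-short} and~\ref{thm:groupagg}) with the predicate-zeroing trick of Section~\ref{sec:filter-in-qry}, and conclude the $O(\log n)$ overhead from the $\Omega(n)$ lower bound on any insecure algorithm without auxiliary structures. The paper's own proof is a two-sentence sketch that simply cites these ingredients; your version actually spells out the composition and correctly flags the only non-trivial bookkeeping (that contraction preserves acyclicity and collapses the grouping attributes into one relation), which the paper leaves implicit.
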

\begin{proof} (Sketch) From
  Theorems~\ref{thm:obl-binjoin}-\ref{thm:groupagg} and the informal
  descriptions in Section~\ref{sec:filter-in-qry} and \ref{sec:fkey},
  it follows that there exist oblivious query processing algorithms
  for above class of queries that run in $O(n \log n)$ time and
  require $O( \log n)$ TM memory. Further, any algorithm requires
  $\Omega (n)$ time without auxiliary structures.
\end{proof}

\begin{theorem}
\label{thm:ext-formal}
For the class of queries in Theorem~\ref{thm:formal} there exists an
oblivious (secure) algorithm with I/O complexity within multiplicative
factor $\log_{M/B} (n/B)$ of that of the optimal insecure algorithm,
where $B$ is the block size and $M$ is the TM memory.
\end{theorem}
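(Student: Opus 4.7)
The plan is to revisit every primitive of Section~\ref{sec:prim} and every query processing algorithm of Section~\ref{sec:algorithm}, and to replace the $O(n \log n)$ internal‑memory oblivious sort they rely on with the external‑memory oblivious sort of~\cite{DBLP:conf/spaa/Goodrich11}. That algorithm sorts $n$ records using $O\!\big((n/B)\log_{M/B}(n/B)\big)$ I/Os, matching the optimal (non‑oblivious) external‑memory sort of Aggarwal and Vitter up to constants; crucially its I/O trace depends only on $n$, $M$, and $B$, not on the data. Obliviousness of the full algorithms then transfers unchanged from Section~\ref{sec:algorithm}, because swapping one oblivious sort for another only changes the granularity at which UM is touched, not the data‑independence of the trace.

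First, I would observe that every simple primitive — generalized union, per‑tuple relation augmentation, grouping identity $\idfunc_{\Gcal}^{\Ocal}$, grouping running sum $\rsfunc_{\Gcal}^{\Ocal}$, oblivious filter, and sequence stitching — reduces to a constant number of oblivious sorts plus a constant number of sequential scans of inputs and outputs; scans cost $O(n/B)$ I/Os with a trivially oblivious pattern. Hence each such primitive runs in $O\!\big((n/B)\log_{M/B}(n/B)\big)$ I/Os. The two more involved primitives, semi‑join aggregation (Algorithm~\ref{alg:semi-join-agg}) and expansion (Algorithm~\ref{alg:obl-exp}), are just constant compositions of these simpler primitives: \textsc{ExpandPrefixHeavy} processes its input in a single pass, and \textsc{ReorderBarelyPrefixHeavy} uses $O(1)$ oblivious sorts driven by the $O(\log n)$‑sized rounded weight distribution $\Dcal_{TM}$, whose access pattern depends only on $n$ and $m$. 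Both therefore fit in the $O\!\big(((n+m)/B)\log_{M/B}((n+m)/B)\big)$ budget.

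Next, I would lift this to the query algorithms. For a fixed query template the number of input relations $q$ is constant, and each of Algorithms~\ref{alg:bin-join} and~\ref{alg:groupagg}, the multiway join of Section~\ref{sec:nary-join-short}, and the selection/foreign‑key extensions of Sections~\ref{sec:filter-in-qry}–\ref{sec:fkey}, invokes a constant number of the above primitives on tables of size at most $n$ (the sum of input and output sizes). Summing yields a total of $O\!\big((n/B)\log_{M/B}(n/B)\big)$ I/Os for every query in the class of Theorem~\ref{thm:formal}. For the denominator, any algorithm for a non‑trivial query must at least read its input and write its output, so the optimal insecure algorithm uses $\Omega(n/B)$ I/Os under the convention that $n$ is the sum of input and output sizes. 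Dividing gives the claimed $\log_{M/B}(n/B)$ multiplicative overhead.

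The main obstacle I anticipate is verifying that the TM state used by the expansion algorithm — the distribution $\Dcal_{TM}$ and the counter table $\Ccal_{TM}$ maintained by \textsc{ExpandPrefixHeavy} — still fits in $M$ once we move to an external‑memory setting with blocks of size $B$. By Lemma~\ref{lem:space-order} and the rounding step of Algorithm~\ref{alg:obl-exp}, both structures have size $O(\log n)$, which is dominated by any reasonable $M \geq B \geq \log n$; so the TM footprint assumption of~\cite{DBLP:conf/spaa/Goodrich11} is preserved. Modulo that sanity check, the theorem follows by composing the per‑primitive I/O bound with the constant‑depth structure of each query algorithm.
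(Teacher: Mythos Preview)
Your proposal is correct and follows essentially the same approach as the paper's own proof sketch: observe that every step of the algorithms in Sections~\ref{sec:prim}--\ref{sec:algorithm} is either a sequential scan or an oblivious sort, swap in the external-memory oblivious sort of~\cite{DBLP:conf/spaa/Goodrich11}, and compare against the trivial $\Omega(n/B)$ lower bound. The paper's proof is a two-line sketch stating exactly this, so your version is simply a more detailed unfolding of the same argument.
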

\begin{proof} (Sketch) All of our algorithms are simple scans except
  for the steps that perform oblivious sorting. The I/O complexity of
  oblivious sorting is\linebreak $O (\frac{n}{B} \log_{M/B}
  (\frac{n}{B}))$~\cite{DBLP:conf/spaa/Goodrich11} from which the
  theorem follows.
\end{proof}


\section{Hardness Arguments}
\label{sec:hardness}

Section~\ref{sec:algorithm} presented efficient oblivious algorithms
for evaluating (G)SPJ queries when the underlying join was acyclic and
all grouping columns belonged to a single relation. All of our
algorithms have time complexity $O( (n + m) \log (n + m))$ where $n$
and $m$ and input and output sizes, respectively. Any algorithm
requires $\Omega (n + m)$ time without pre-processing, so our
algorithms are within a $\log (n + m)$ factor away from an instance
optimal algorithm. In the following, we call such oblivious algorithms
\emph{instance efficient}. While we do not have formal proofs, we
present some evidence that suggests that instance efficient oblivious
algorithms for cyclic joins and for the case where grouping columns
come from different tables seem unlikely to exist. 

At a high level, our arguments rely on the following intuition: if a
query $Q$ does not have a near-linear algorithm (with time complexity
$(n + m) \mathrm{polylog} (n + m)$) in the worst case it is unlikely
to have an oblivious algorithm since its behavior on an easy instance
would be different from that on a worst case instance. There are some
difficulties directly formalizing this intuition since some of the
computation occurs within TM potentially without an externally visible
data access.

Recent work has identified the following \emph{3SUM} problem as a
simple and useful problem for polynomial time lower-bound
reductions~\cite{DBLP:conf/stoc/Patrascu10}: Given an input set of $n$
numbers identify $x, y, z \in S$ such that $x + y = z$. There exists a
simple $O(n^2)$ algorithm for 3SUM: Store the $n$ numbers in $S$ in a
hashtable $\Hcal$. Consider all pairs of numbers $x, y \in S$ and
check if $x + y \in \Hcal$. It is widely believed that this algorithm
is the best possible and \cite{DBLP:conf/stoc/Patrascu10} uses this
3SUM-hardness conjecture to establish lower bounds for a variety of
combinatorial problems.

We introduce the following simple variant of 3SUM-hardness that
captures the additional complexity of TM computations. In this
variant, an algorithm has access to a \emph{cache} of size $n^\delta$
$(\delta < 1)$ words. Access to the cache is free while accesses to
non-cache memory have a unit (time) cost. We conjecture that having
access to a free cache does not bring down the asymptotic complexity
of 3SUM. For example, in the hashtable based solution above, only a
small part of the input (at most $n^\delta$) can be stored in the
cache and most of the hashtable lookups ($n - n^\delta$) incur a
non-cache access cost.
\begin{conjecture} 
  \label{conj:3sumcache} (3SUM-Cache($\delta$)-hardness) Any algorithm
  for 3SUM with input size $n$ having access to a free cache of size
  $n^\delta$ requires $\Omega(n^{2 - o(1)})$ time in expectation.
\end{conjecture}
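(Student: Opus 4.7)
The plan is to establish Conjecture \ref{conj:3sumcache} via an adversarial cell-probe-style argument, leveraging the fact that a cache of size $n^\delta$ is sublinear in the input and hence can encode only bounded information about the instance. The goal is to show that for any algorithm $\mathcal{A}$ solving 3SUM in the cache model, there exists an input distribution forcing $\mathcal{A}$ to make $\Omega(n^{2-o(1)})$ non-cache probes in expectation. A reduction from the unconstrained 3SUM-hardness conjecture is attractive in principle, but the cache model is strictly more powerful than the standard word-RAM model (since unboundedly many operations within the cache are free), so the reduction cannot be entirely black-box and must exploit the sublinearity of the cache.

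The concrete steps are: (i) formalize the cache model precisely, distinguishing ``free'' work inside the $n^\delta$-word cache from ``paid'' accesses to external memory; (ii) prove a progress lemma stating that after $t$ non-cache probes, the cache contents are a deterministic function of at most $t$ input words together with the initial cache state, so the algorithm has ``witnessed'' at most $\binom{t+n^\delta}{2}$ pair-sums of the input; (iii) construct an adversarial distribution on hard 3SUM instances---for example, uniformly random integers modulo a large prime, with at most one planted solution---and use a counting argument to show that distinguishing YES from NO instances requires the algorithm to examine a constant fraction of all $\binom{n}{2}$ pair-sums; (iv) combine (ii) and (iii) to conclude $t = \Omega(n^{1-o(1)})$ against this adversary, and then bootstrap via direct-sum or amplification techniques to the desired $\Omega(n^{2-o(1)})$ bound.

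The main obstacle is that the cache, while small, can store arbitrary functions of the input words it has seen, so naive counting undercounts the algorithm's effective knowledge. In particular, the cache can compute hashes, sorted subsets, or partial sums, which may allow it to implicitly rule out many pair-sums without explicit enumeration. Overcoming this will require either a genuine information-theoretic cell-probe argument---bounding the bits of useful information extracted per paid probe rather than per comparison---or restriction to a weaker model such as comparison trees, algebraic decision trees, or hashing-based algorithms, where unconditional $\Omega(n^{2-o(1)})$ bounds are more tractable. A pragmatic fallback, consistent with how the original 3SUM conjecture is supported, is to view Conjecture \ref{conj:3sumcache} as a natural extrapolation: no known technique exploits a sublinear cache to break the $n^2$ barrier for 3SUM, and any such technique would constitute a significant algorithmic advance independently of its use here.
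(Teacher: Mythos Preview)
The statement you are trying to prove is a \emph{conjecture}, not a theorem: the paper explicitly introduces it as an unproven hardness assumption (a variant of the standard 3SUM conjecture adapted to the trusted-module setting) and then uses it as a hypothesis in the conditional lower bounds that follow. There is no proof in the paper to compare your proposal against, because the paper does not attempt one.

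Beyond that, your plan cannot succeed as stated. Observe that the standard 3SUM conjecture is the special case $\delta = 0$ (or cache of size $O(1)$), so any unconditional proof of Conjecture~\ref{conj:3sumcache} would in particular resolve the ordinary 3SUM conjecture, which is a major open problem. Your step~(ii) progress lemma is too weak in the direction you note yourself: the cache can compute arbitrary functions of the probed words, so bounding ``witnessed pair-sums'' by $\binom{t+n^\delta}{2}$ does not control what the algorithm actually learns. And in step~(iv) the jump from $t = \Omega(n^{1-o(1)})$ to $\Omega(n^{2-o(1)})$ via unspecified ``direct-sum or amplification'' is not a plan but a hope; no such amplification is known for 3SUM (if it were, the base conjecture would already be settled). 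Your final paragraph is the honest assessment: the conjecture is a natural extrapolation supported by the absence of any known algorithm exploiting a sublinear cache, and that is exactly the status the paper gives it.
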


Assuming this conjecture is true, the algorithms of
Section~\ref{sec:algorithm} almost represent a characterization of the
class of single-block queries that have an instance efficient
oblivious algorithm.

\begin{theorem}
\label{thm:cyc-join-hard}
There does not exist an instance efficient oblivious algorithm with a
TM with memory $n^{\delta}$ for cyclic joins over inputs of size $n$
unless there exists a subquadratic $O(n^{2 - \Omega(1)})$ algorithm
for 3SUM-Cache($\delta$).
\end{theorem}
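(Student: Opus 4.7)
The plan is to prove the theorem by a reduction from 3SUM-Cache($\delta$) to the oblivious evaluation of a triangle (3-cycle) join $R(A,B) \join S(B,C) \join T(A,C)$, which is the canonical cyclic join. The key observation linking the two models is that the $n^\delta$ TM memory of the oblivious algorithm plays exactly the role of the free cache in the 3SUM-Cache model: everything in TM is free to access, while every untrusted-memory access is charged.

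First, I would exhibit a reduction that takes a 3SUM instance $S = \{s_1, \ldots, s_n\}$ and, in $O(n)$ time and $O(n^\delta)$ auxiliary space, produces three relations $R, S, T$, each of size $\Theta(n)$, whose triangle join enumerates a \emph{superset} of the 3SUM solutions. The natural tool here is a P\u{a}tra\c{s}cu-style linear-hash construction: pick a random linear hash $h$ into $\Theta(\sqrt n)$ buckets so that $h(x)+h(y) \equiv h(x+y)$, and encode each $s \in S$ as $O(1)$ tuples in each of $R, S, T$ that carry both its value and its bucket. Because $h$ is linear, the 3SUM constraint $x+y=z$ forces a compatible bucket pattern, which can be expressed purely via equality matches across the three edges of the triangle. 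A final arithmetic filter on each produced triangle decides whether it is a genuine 3SUM solution.

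Next, I would bound the output size of the constructed triangle join. By Kruskal–Katona (equivalently, the AGM bound with fractional edge cover $3/2$), a triangle join whose three input relations each have size $O(n)$ produces at most $m = O(n^{3/2})$ output tuples. Consequently, if an instance-efficient oblivious algorithm with $n^\delta$ TM memory existed, running it on this instance would take $O((n+m)\,\mathrm{polylog}(n+m)) = O(n^{3/2}\,\mathrm{polylog}\, n)$ time. A subsequent linear-time filter over the output produces a 3SUM decision. Altogether this yields an algorithm for 3SUM in the cache model running in $O(n^{3/2}\,\mathrm{polylog}\, n) = O(n^{2-\Omega(1)})$ time with only $n^\delta$ cache, contradicting Conjecture~\ref{conj:3sumcache}.

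The main obstacle is the reduction in the first step: natural joins only test equality, while 3SUM is additive, so one must design the encoding so that (i) input relations remain $O(n)$ after hashing, (ii) the equality-only join structure alone is enough to enumerate a candidate set of size $O(n^{3/2})$ covering every true 3SUM solution, and (iii) both the construction of $R, S, T$ and the post-filtering stage fit within the $n^\delta$ free-memory budget of the cache model, so that the reduction itself does not push the running time back above $n^{2-o(1)}$. Once these three properties are achieved simultaneously, the rest of the argument is essentially the AGM-based output-size bound plus the instance-efficient running time of the hypothesized oblivious algorithm.
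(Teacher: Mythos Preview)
Your overall strategy---reduce 3SUM to a triangle (cyclic) join, identify the TM memory with the free cache, and conclude that an instance-efficient oblivious algorithm would solve 3SUM-Cache($\delta$) subquadratically---is the same as the paper's. The difference is that the paper does not build the reduction by hand: it simply invokes P\u{a}tra\c{s}cu's result (Theorem~5 of~\cite{DBLP:conf/stoc/Patrascu10}) that enumerating $m$ triangles in a graph with $m$ edges in time $O(m^{4/3-\epsilon})$ is 3SUM-hard, observes that triangle enumeration \emph{is} a cyclic join, and notes that an instance-efficient oblivious algorithm would do it in $O(m\,\mathrm{polylog}\,m)=o(m^{4/3})$ UM accesses with an $n^\delta$ TM, contradicting the conjecture.

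Your hand-rolled reduction has a real gap precisely where you flag the ``main obstacle.'' You assert that each 3SUM element contributes $O(1)$ tuples to each of $R,S,T$, so the relations have size $\Theta(n)$ and AGM caps the triangle output at $O(n^{3/2})$. But a natural join tests only equality, and the additive bucket constraint $h(x)+h(y)\equiv h(z)$ cannot be expressed as pairwise equalities on three attributes each determined by a \emph{single} input element; some relation must enumerate compatible \emph{pairs} of buckets, which is $\Theta(\sqrt n)$ tuples per element and hence $\Theta(n^{3/2})$ tuples overall---exactly the size in P\u{a}tra\c{s}cu's actual construction. If you retreat to $\Theta(n^{3/2})$-sized relations and then apply AGM blindly, the output bound becomes $n^{9/4}$ and your running time is no longer subquadratic in the 3SUM parameter $n$.

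The fix is either to cite P\u{a}tra\c{s}cu as a black box, as the paper does, or to use his construction but replace the AGM bound with the instance-specific fact that his hard graphs on $m$ edges have only $\Theta(m)$ triangles; then both the input and the output of the cyclic join are $\Theta(m)$, instance efficiency gives $O(m\,\mathrm{polylog}\,m)$, and this unwinds through his reduction to a subquadratic 3SUM-Cache($\delta$) algorithm.
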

\begin{proof}
  Enumerating $m$ triangles in a graph with $m$ edges in time
  $O(m^{4/3-\epsilon})$ is 3SUM-hard~\cite{DBLP:conf/stoc/Patrascu10}
  (Theorem 5). Enumerating triangles can be expressed as a cyclic join
  query over the edge relation. It follows that evaluating a cyclic
  join query in $O(m + n)^{4/3 - \epsilon}$ is 3SUM-hard. The free
  cache does not affect this reduction from 3SUM to cyclic join
  evaluation, so evaluating a cyclic join query using a TM with cache
  $n^\delta$ with $O(m + n)^{4/3 - \epsilon}$ UM memory accesses is
  3SUM-Cache($\delta$)-hard. We can construct easy instances for all
  sufficiently large $m, n$ that only require $O(m + n)$ processing
  time. It follows that there does not exist an instance efficient
  oblivious algorithm for cyclic joins.
\end{proof}

The \emph{set intersection enumeration} problem is the following:
given $k$ sets $S_1, \ldots, S_k$, $S_i \subseteq \Ucal$ drawn from
some universe $\Ucal$, identify all pairs $(i,j)$ such that $S_i
\isect S_j \neq \phi$. A simple algorithm is to build an inverted
index that stores for each element $e \in S_1 \union \ldots \union
S_k$ the list of integers $j$ such that $e \in S_j$. We consider all
pairs of integers within each list and output the pair if we have not
already done so. This simple algorithm is quadratic in the input and
output sizes in the worst case. The set intersection enumeration
problem is fairly well-studied and is at the core of most
high-dimensional~\cite{DBLP:conf/webdb/HaveliwalaGI00} and approximate
string matching~\cite{DBLP:conf/vldb/ArasuGK06} but no asymptotically
better algorithm is known. There is a simple reduction from set
intersection enumeration to evaluating grouping queries where grouping
attributes come from multiple relations.

\begin{theorem}
\label{thm:sie-red}
  If there exists a $O((n + m) \mathrm{polylog}(n + m))$ time
  algorithm for evaluating $\groupagg_{A,B} (R (A, Id) \join S (B,
  Id))$ where $n$ and $m$ are input and output sizes of the query,
  then there exists an $O((n + m) \mathrm{polylog}(n + m))$ time
  algorithm for set intersection enumeration.
\end{theorem}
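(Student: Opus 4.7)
}
The plan is a direct reduction: given an instance $S_1, \ldots, S_k$ of set intersection enumeration with $S_i \subseteq \Ucal$, I would build two relations $R(A, \mathit{Id})$ and $S(B, \mathit{Id})$ as follows. For every element $e \in S_i$ add the tuple $\langle i, e \rangle$ to $R$ (using $A$ to record the set index) and symmetrically add $\langle i, e \rangle$ to $S$ (using $B$ to record the set index). Both relations can be materialized in linear time in $N := \sum_i |S_i|$ by a single pass over the input sets.

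Next I would verify the semantic correspondence. A tuple of $R \join S$ corresponds to a triple $(i, j, e)$ with $e \in S_i \cap S_j$, so the grouping query $\groupagg_{A,B}(R \join S)$ produces exactly one output tuple $\langle i, j \rangle$ per ordered pair of sets whose intersection is nonempty. Discarding the trivial pairs with $i = j$ (or handling them by a symmetry break during output) yields precisely the desired enumeration. Thus the output size $m'$ of the grouping query equals (up to a constant factor) the output size of the set intersection enumeration instance, while the input size of the query is $n' = |R| + |S| = 2N$, which is linear in the input size of the enumeration problem.

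From here the complexity bookkeeping is immediate: any algorithm evaluating the grouping query in $O((n' + m')\,\mathrm{polylog}(n' + m'))$ time translates, after the linear-time construction of $R$ and $S$ and a linear-time post-processing to drop self-pairs, into an $O((n + m)\,\mathrm{polylog}(n + m))$ algorithm for set intersection enumeration, where $n$ and $m$ are the input and output sizes of the enumeration instance.

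I do not expect a substantive obstacle here; the reduction is structural and the sizes match up to constant factors. The only subtlety worth being careful about is ensuring that the grouping aggregation really outputs one tuple per distinct $(A,B)$ pair rather than one per joining $\mathit{Id}$ (which is exactly what $\groupagg_{A,B}$ guarantees under multiset semantics), and verifying that the construction of $R$ and $S$ plus the removal of diagonal pairs do not themselves dominate the claimed running time. Both points are routine.
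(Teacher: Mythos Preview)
Your reduction is correct and is essentially the paper's own proof: encode each membership $e \in S_i$ as a tuple in both $R$ and $S$, so that the natural join on $\mathit{Id}$ followed by grouping on $(A,B)$ outputs exactly the intersecting pairs, with input and output sizes matching up to constants. You place the set index in the $A/B$ columns and the element in $\mathit{Id}$, which is the encoding that directly matches the stated definition of set intersection enumeration (the paper's text writes the tuple as $(e,i)$ with $A,B$ ranging over $\Ucal$, i.e., the dual encoding), but the construction and the argument are identical.
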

\begin{proof}
  We encode the input to set intersection enumeration as two relations
  $R (A, Id)$ and $S(B, Id)$. The domain of $A$ and $B$ is $\Ucal$ the
  universe of elements. For each $e \in S_i$ we include a tuple $(e,
  i)$ in both $R$ and $S$. ($R$ and $S$ are therefore identical.) The
  theorem follows from the observation that the desired output of set
  intersection enumeration is precisely $\groupagg_{A,B} (R (A, Id)
  \join S (B, Id))$.
\end{proof}
Theorem~\ref{thm:sie-red} along with the fact that we can construct an
input instance for the query $\groupagg_{A,B} (R (A, Id) \join S (B,
Id))$ which can be evaluated in $O((n + m) \mathrm{polylog}(n + m))$
time suggests that an oblivious algorithm for this query is likely to
imply a (significantly) better algorithm for set intersection
enumeration than currently known.


\section{Related Work}
\label{sec:rel}

While we focused mostly on systems that use trusted hardware for
querying encrypted data, there exist other approaches. One such
approach relies on \emph{homomorphic encryption} that allows
computation directly over encrypted data; e.g., the \emph{Paillier
  cryptosystem} \cite{Paillier99} allows us to compute the encryption
of $(v_1 + v_2)$ given the encryptions of $v_1$ and $v_2$ without
requiring the (private) encryption key, and can be used to process
\texttt{SUM} aggregation queries~\cite{GeZ07}. However, despite recent
advances, practical homomorphic encryption schemes are currently known
only for limited classes of computation. There exist simple queries
that the state-of-the-art systems~\cite{PopaRZB11} that rely solely on
homomorphic encryption cannot process. A second approach is to use the
client as the trusted location inaccessible to the
adversary~\cite{HacigumusILM02, monomi}. One drawback of using the
client is that some queries might necessitate moving large amounts of
data to the client for query processing and defeat the very purpose of
using a cloud service. We can reduce some of the above drawbacks by
combining homomorphic encryption with the client processing
approach~\cite{monomi}, but, given the current state-of-the-art of
homomorphic encryption, a comprehensive and robust solution to
querying encrypted data seems to require the trusted hardware
architecture we have assumed in this paper.


\section{Acknowledgments}
Discussions with the Cipherbase team greatly helped us at all stages
of this work. We thank Chris Re, Atri Rudra, and Hung Ngo for pointing
us to hardness of enumerating triangles result from
\cite{DBLP:conf/stoc/Patrascu10}, and Bryan Parno and Paraschos
Koutris for valuable feedback on initial drafts. We also thank Dan
Suciu and anonymous reviewers for helping better position our work
against ORAM simulation.


\bibliographystyle{plain}
\bibliography{security}
\pagebreak
\appendix
\normalsize
\section{Query and Metadata Security}
\label{sec:query-sec}

Ideally a secure query processing system hides both database contents
and queries being evaluated against the database. Given how databases
are typically used, we believe it is important to study the security
of database contents even if the adversary has full knowledge of the
query. Databases are typically accessed using a front end application,
and the entropy of the queries run over the database is typically
small given knowledge of the application. As a concrete example,
consider a paper review system like EasyChair, which is a web
application that presumably stores and queries papers and review
information using a backend database. The web application might itself
be well known (or open source~\cite{DBLP:conf/nsdi/Kohler08}), so the
kinds of queries issued to the database is public knowledge.

That said, the following question remains: To what extent should a
secure query processing system hide queries? What should an adversary
without knowledge of the query be able to learn? We argue that perfect
query security---being able to mask whether a query is a single table
filter or a 50-table join---is impractical given the richness of
database query languages. Going the extra mile to hide queries is also
wasteful if we assume knowledge of the application is easy to acquire
as in the discussion of the paper review system above.

What an application (source code) does not usually specify is actual
query constants which might be generated, e.g., by users filling in forms
or from values in the database and it might be valuable (and as it turns
out, easy) to secure these. Accordingly, in this paper we equate query
security with \emph{query constants} security: an adversary might
learn by observing a query execution, that e.g., the query is a 2-way
join with a filter on the first table, but she does not learn the
filter predicate constants. We note that this is either
explicitly~\cite{HacigumusILM02,PopaRZB11,monomi} or
implicitly~\cite{cipherbase,BajajS11} the notion of query security
used in prior work.

A related issue is metadata security. Metadata refers to information
such as number of tables and the schema (column names and types) of each
table. While column and table names can be easily anonymized, for the
same reasons mentioned in query security, formally securing metadata
is both difficult and not very useful if the adversary has knowledge
of the application.


\end{document}